\documentclass[showpacs,amsmath,amssymb,twocolumn,pra,superscriptaddress]{revtex4-2}

\usepackage{amsfonts,amsthm,mathrsfs,eufrak,xspace,framed}
\usepackage{color}
\usepackage{algorithm}
\floatname{algorithm}{Protocol}
\usepackage{algorithmic}
\usepackage{subfigure}
\usepackage{amscd}
\usepackage{multirow}
\usepackage{graphicx}
\usepackage{dcolumn}
\usepackage{bm}
\usepackage{bbm}
 \usepackage{enumitem}

\newtheorem{lemma}{Lemma}
\newtheorem{theorem}{Theorem}

\newcommand{\bra}[1]{\mbox{$\left\langle #1 \right|$}}
\newcommand{\ket}[1]{\mbox{$\left| #1 \right\rangle$}}

\newcommand{\indep}{\mathrel{\raisebox{0.05em}{\rotatebox[origin=c]{90}{$\models$}}}}


\newcommand{\be}{\begin{equation}}
\newcommand{\ee}{\end{equation}}
\newcommand{\bea}{\begin{eqnarray}}
\newcommand{\eea}{\end{eqnarray}}

\definecolor{mygreen}{rgb}{0,0.5,0}
\definecolor{myblue}{rgb}{0,0,0.75}
\definecolor{mymagenta}{cmyk}{0,1,0,0.12}

\usepackage{umoline}




\begin{document}
\title{Sequential device-independent certification of indefinite causal order}
\author{Zhu Cao}
\email{caozhu@ecust.edu.cn}
\affiliation{Key Laboratory of Smart Manufacturing in Energy Chemical Process, Ministry of Education, East China University of Science and Technology, Shanghai 200237, China}

\begin{abstract}
Indefinite causal order has found numerous applications in quantum computation, quantum communication, and quantum metrology.
Before its usage, the quality of the indefinite causal order needs to be first certified, and the certification should ideally be device-independent (DI) to avoid the impact of device imperfections. In this work, we initiate the study of the sequential DI certification of an indefinite causal order. 
This can be useful in experimental platforms where the generation of an indefinite causal order is difficult.
We show that an arbitrary number of sequential DI certifications of an indefinite causal order can be achieved with a quantum switch and
also analyze practical requirements for experimental implementations of the certifications. Our work opens 
the possibility of reusing the resource of an indefinite causal order multiple times in device-independent quantum information processing.
\end{abstract}

\maketitle

\section{Introduction}

Indefinite causal order, which says there is no definite causal order between several events, is a unique quantum phenomenon that has attracted a lot of interest in recent years \cite{oreshkov2012quantum,chiribella2013quantum,brukner2014quantum,araujo2015witnessing,oreshkov2016causal,barrett2021cyclic}.
The peculiar property of indefinite causal order has found many applications, including
lowering query complexity  \cite{colnaghi2012quantum,araujo2014computational,renner2022computational}, improving channel discrimination accuracy  \cite{chiribella2012perfect},  increasing quantum channel capacity  \cite{ebler2018enhanced,procopio2019communication,goswami2020increasing,caleffi2020quantum,bhattacharya2021random,chiribella2021indefinite,sazim2021classical,chiribella2021quantum}, reducing communication complexity  \cite{guerin2016exponential}, improving the efficiency of thermodynamics  \cite{felce2020quantum,guha2020thermodynamic,simonov2022work}, and enhancing the precision of metrology  \cite{zhao2020quantum,chapeau2021noisy}.
Currently, the quantum switch \cite{chiribella2013quantum} is the only example of indefinite causal order that has been experimentally demonstrated  \cite{procopio2015experimental,rubino2017experimental,goswami2018indefinite,guo2020experimental,rubino2022experimental}. 
The quantum switch, like a classical switch, controls the wirings of a circuit. However, in contrast to a classical switch, the quantum switch can be 
in a superposition of the on and off states.

Before applying an indefinite causal order to applications, it is necessary to first 
certify the existence and also the quality of an indefinite causal order.
Most certification of an indefinite causal order is device-dependent \cite{araujo2015witnessing,bavaresco2019semi,zych2019bell,dourdent2022semi}, where there are certain assumptions
on the device. In reality, these assumptions may or may not hold, putting doubt on the validity of 
the certification. Hence, it is desirable to certify the indefinite causal order in a device-independent (DI) way. 
By device independence, we mean that the certification can be achieved solely from the inputs 
and outputs of the devices, but not any assumptions on the inner workings of the devices, which is similar to the certification of nonlocality with Bell tests \cite{brunner2014bell}.
Recently, the first device-independent test of the indefinite causal order of a quantum switch was developed \cite{van2022device}. It is based 
on the violation of a local-causal inequality. The difference between a local-causal inequality and a normal causal inequality
is that there are two spacelike separated parties, called Charlie and Bob, involved in the local-causal inequality.

The objective of our research is to explore the fundamental limitations of the indefinite causal order, 
with a focus on whether a long series of indefinite causal orders can be sequentially and device-independently certified from a single quantum switch. 
This approach may be practically valuable in experimental platforms with inherent challenges in generating quantum switches, as observed in nitrogen-vacancy centers \cite{hensen2015loophole}. The primary difficulty arises from the fact that a quantum switch necessitates the utilization of one subsystem from a maximally entangled state as a control qubit. However, in the case of nitrogen-vacancy centers, the generation of a single maximally entangled state is a time-consuming process, limited to once per hour \cite{hensen2015loophole}. More precisely, we consider the certification of the indefinite causal order in a quantum switch by Charlie and $k$ 
independent Bobs. Charlie performs the certification with one of the Bobs sequentially. Here, the Bobs are 
independent in the sense that they do not share their measurement settings and outcomes.

We show that two sequential certifications of the indefinite causal order in a quantum switch are possible with an explicit instance of the certifications. 
We also give the maximum value of the violation for the case of two sequential tests. The margin is substantial, where the classical bound is $7/4$ and the quantum value for both rounds is more than 1.764. 
Next, we check loopholes for two sequential tests, including the detection loophole and 
randomness loophole. We show that there is 
no detection loophole when the detection efficiency is larger than $99.6\%$. For the randomness loophole, we show that when the min-entropy of the randomness 
is at least 1.92,  there is no randomness loophole. We also examine the case of $k>2$ sequential
 certifications of an indefinite causal order. We show that such certifications are possible for arbitrarily large $k$.
This in particular implies that one can indeed achieve a long series of sequential device-independent certifications of an indefinite causal order with a single quantum switch. For proving this result, we have given an explicit construction
for $k$ sequential tests, where $k$ is any positive integer.
Our work opens the possibility of reusing the resource of an indefinite causal order multiple times in device-independent quantum information processing.

\section{Sequential local-causal inequality}The sequential certification scenario is an extension of the single certification scenario \cite{van2022device}. In the single certification scenario, there are four parties, Alice 1, Alice 2, Bob, and Charlie. Alice 1, Alice 2, Bob, and Charlie are
given the inputs $x_1$, $x_2$, $y$, and $z$ (called settings) respectively and outputs $a_1$, $a_2$, $b$, and $c$, respectively.
Here, Alice 1 is either causally before Alice 2 or causally after Alice 2. Charlie is always causally after Alice 1 and Alice 2. 
Bob is space separated from Alice 1, Alice 2, and Charlie. The probability distribution to be examined is 
$p( a_1a_2 b c | x_1 x_2 y z)$ in the single certification scenario. There are certain restrictions on this 
probability distribution based on the causal orders between the four parties. 

For the sequential certification scenario, there are $n+3$ parties: Alice 1, Alice 2, Charlie, Bob 1, Bob 2, $\cdots$, Bob $n$, which are given the inputs $x_1$, $x_2$,  $z$,  $y_1$, $\cdots$, 
 $y_n$, respectively, and outputs $a_1$, $a_2$, $c$,  $b_1$, $\cdots$, $b_n$, respectively.
When a definite causal order exists, Alice 1 is either causally before Alice 2 or causally after Alice 2. Charlie is always causally after Alice 1 and Alice 2. All Bob $k$ 
($1\le k \le n$) are spacelike separated from Alice 1, Alice 2, and Charlie. Bob $k+1$ is causally after Bob $k$ ($1\le k \le  n-1$).
An illustration of the causal relations between the parties is shown in Fig.~\ref{fig:relation}(a).
These causal relations put constraints on the probability distribution $p( a_1a_2 b_k c | x_1 x_2 y_k z)$ for $1\le k\le n$.
which we formulate in mathematical languages in the following.

For simplicity, below we use $A_1$ to denote Alice 1, $A_2$ to denote Alice 2, $C$ to denote Charlie, and 
$B_k$ to denote Bob $k$. Also for simplicity, we use  $a \indep_p  y$ to denote the condition 
\begin{equation}
p(a | xy) = p(a |xy'), \quad \forall a,x,y,y'.
\end{equation}
We assume there is a hidden variable $\lambda$ that controls the causal order between 
Alice 1 and Alice 2. 
Let $\lambda=1$ denote Alice 1 is before Alice 2 and $\lambda=2$ denote Alice 2 is before Alice 1.
In other words,
\begin{equation}
\label{eq:ass4}
a_1 \indep_{p^1} x_2, a_2 \indep_{p^2} x_1, 
\end{equation}
where $p^1$ ($p^2$) denotes the probability distribution  $p( a_1a_2 b_k c | x_1 x_2 y_k z)$ in the case of $\lambda=1$ ($\lambda=2$).

The hidden variable is chosen before any party chooses its setting, namely we have
\begin{equation}
\label{eq:ass2}
\lambda \indep_p x_1x_2y_kz, \forall k.
\end{equation}
The condition that all Bob $k$ 
($1\le k \le n$) are space separated from Alice 1, Alice 2, and Charlie implies
\begin{equation}
\label{eq:ass31}
 a_1a_2 c  \indep_{p^\lambda} y_k,   \quad \forall \lambda,k,
\end{equation}
and
\begin{equation}
\label{eq:ass32}
 b_k  \indep_{p^\lambda} x_1x_2z,     \quad \forall \lambda,k.
\end{equation}
The condition that Charlie is causally after Alice 1 and Alice 2 implies
\begin{equation}
\label{eq:ass33}
a_1a_2  \indep_{p^\lambda} z,   \quad \forall \lambda,k.
\end{equation}

Note here that by two parties, say $A_1$ and $C$, being causally ordered, we always mean that 
the output of the previous party occurs before the latter party is given its input. Hence, two parties 
may not be causally ordered in the following example: the input and output of $A_1$ and $A_2$ 
are ordered in time as $x_1 < x_2  < a_1  < a_2$. Hereafter, we assume all inputs and outputs of 
the same party happen at the same time (or happen within a very short time interval) and the events of different parties happen at different 
times. In this case, all parties can be ordered causally in the classical case, which justifies the assumption
that Alice 1 is either causally before Alice 2 or causally after Alice 2.

We call the probability distributions $p$ that satisfy the causal constraints in Eqs.~\eqref{eq:ass4}-\eqref{eq:ass33} ``local-causal''
correlations, and denote the set of these distributions by $\mathcal{LC}$.

\begin{figure}[htb]
\centering \includegraphics[width=8.5cm]{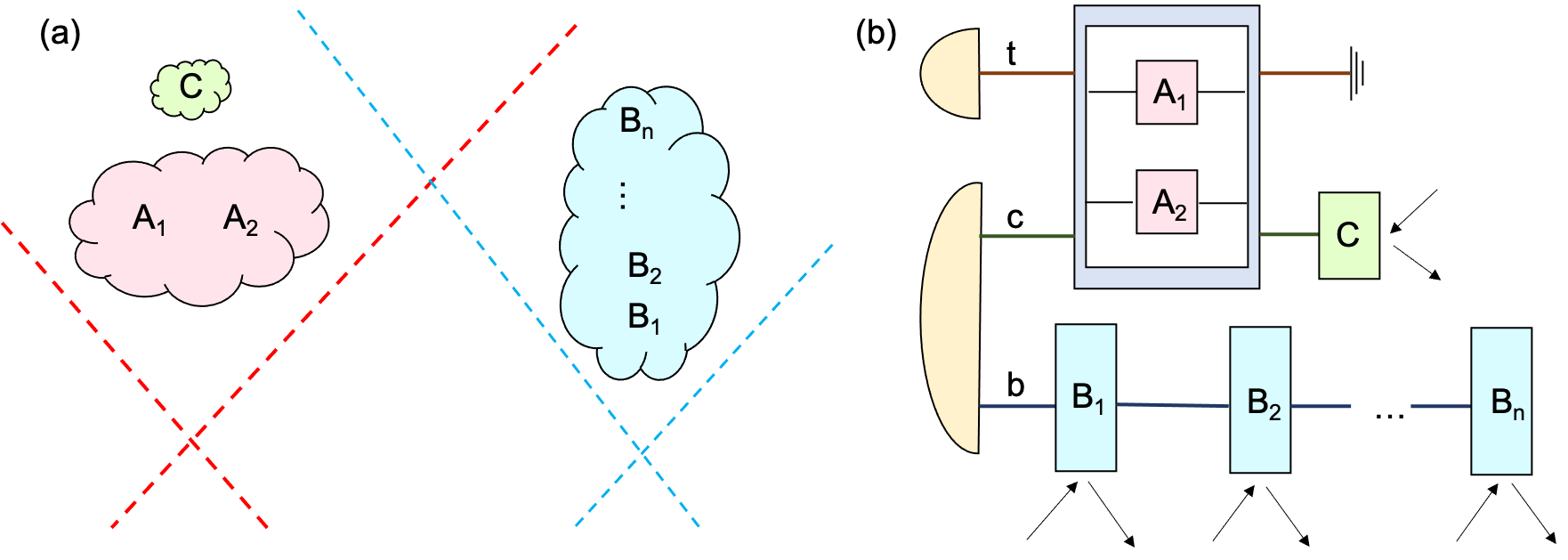}
\caption{(a) The causal relations between the parties. (b) The setup of the sequential certification. }
\label{fig:relation}
\end{figure}


To describe the local-causal (LC) bound, we first define the following quantity for the $k$th round:
\begin{equation}
\begin{aligned}
I^k  =& p( b_k=0, a_2=x_1 | y_k=0 ) + p(  b_k=1 , a_1 =  \\
 &x_2  | y_k=0 ) + p( b_k \oplus c  = y_k z ),
\end{aligned}
\end{equation}
where $a_1,a_2, b_k, c \in \{0,1\}$, $\oplus$ is addition modulo 2, and $x_1, x_2, y_k, z$ 
are independently and uniformly chosen from $\{0,1\}$. When $p\in \mathcal{LC}$, write 
$I^k$ as $I^k_{LC}$.
We have the following local-causal bound.
\begin{theorem}
We have $I _{LC} ^k \le 7/4$ for all $1\le k\le n$.
\end{theorem}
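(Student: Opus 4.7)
The plan is to exploit convexity by reducing to deterministic local-causal strategies, and then to bound the ``guessing'' contribution and the CHSH-like contribution of $I^k$ separately. Because the constraints \eqref{eq:ass4}--\eqref{eq:ass33} together with the hidden-variable structure carve out a convex polytope, it suffices to prove the inequality on its vertices: deterministic strategies indexed by $\lambda\in\{1,2\}$ and a shared seed $\mu$, under which each party's output is a deterministic function of the inputs (and, for the causally later parties, the outputs and inner messages) available to it through the permitted causal structure. Linearity then lifts the bound to every $p\in\mathcal{LC}$.

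Fix such a vertex and first consider the branch $\lambda=1$, in which Alice~1 is causally before Alice~2. By \eqref{eq:ass4} her output takes the form $a_1=a_1(x_1)$ independently of $x_2$, so averaging over uniform $x_2$ gives $p(a_1=x_2\mid x_1)=1/2$. Alice~2 may read $x_1$ through the message she receives from Alice~1 and can therefore realize $a_2=x_1$ with certainty. The deterministic Bob either outputs $b_k(0)=0$, in which case Term~1 of $I^k$ is at most $1$ while Term~2 vanishes, or outputs $b_k(0)=1$, in which case Term~1 vanishes and Term~2 is at most $1/2$; in either case the sum of the first two terms is bounded by $1$. The branch $\lambda=2$ yields the identical bound after swapping the roles of $a_1$ and $a_2$, so averaging over $\lambda$ and $\mu$ gives $p(b_k=0,a_2=x_1\mid y_k=0)+p(b_k=1,a_1=x_2\mid y_k=0)\le 1$.

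For the third term I would condition further on the shared seed. Then $b_k=b_k(y_k)$ depends on $y_k$ only, while $c$, although permitted to use $x_1,x_2,a_1,a_2,z$ through the causal ordering, becomes, after a further fixing of $x_1,x_2$, a deterministic function $c(z)$ of $z$ alone. Since $x_1,x_2$ are uniform and independent of $y_k,z$, the probability $p(b_k\oplus c=y_kz)$ reduces, conditionally on $(\lambda,\mu,x_1,x_2)$, to the winning probability of a deterministic local-realistic strategy for the CHSH game, which is at most $3/4$; averaging back over $x_1,x_2,\mu,\lambda$ preserves this bound. Adding the two estimates yields $I^k\le 1+3/4=7/4$ on every vertex, and convexity extends the inequality to all local-causal $p$.

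The step I expect to be most delicate is precisely the reduction to deterministic strategies in which Bob's output is treated as an autonomous function of $y_k$ given the shared variables: the marginal independences \eqref{eq:ass31}--\eqref{eq:ass32} by themselves only rule out direct signalling between Bob and the rest, and the genuine local-hidden-variable factorization behind the CHSH bound must be extracted from the full local-causal hidden-variable model, rather than being read off from any individual listed constraint. Once this factorization is in place, the two-part decomposition ``guessing $\le 1$ plus CHSH $\le 3/4$'' closes the argument cleanly.
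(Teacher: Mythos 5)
Your proposal takes a genuinely different route from the paper's: the paper proves this theorem in one line, by observing that the hypotheses of Theorem~1 of Ref.~\cite{van2022device} are met in each round $k$ and simply invoking that theorem, so it contains no self-contained argument at all. What you have written is essentially a reconstruction of the proof of the cited theorem: reduce to deterministic vertices, bound the two ``guessing'' terms jointly by $1$ using the definite causal order between Alice~1 and Alice~2 (whichever Alice acts first is blind to the other's setting and so guesses it with probability $1/2$, and once $b_k(0)$ is fixed only one of the two terms survives), and bound the third term by the deterministic CHSH winning probability $3/4$. The decomposition and the case analysis are correct, the bound $1+3/4=7/4$ is tight, and your write-up buys the reader an explicit argument that the paper delegates entirely to the reference.

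The one caveat is the step you yourself flag as delicate. The reduction to deterministic vertices requires the Bell-locality factorization $p(a_1a_2b_kc\mid x_1x_2y_kz,\lambda,\mu)=p(a_1a_2c\mid x_1x_2z,\lambda,\mu)\,p(b_k\mid y_k,\lambda,\mu)$ conditional on the hidden variables. The listed constraints \eqref{eq:ass31} and \eqref{eq:ass32} are only marginal no-signalling conditions and do not imply this factorization (a PR box shared between Bob~$k$ and Charlie satisfies both), so the factorization cannot be ``extracted'' from Eqs.~\eqref{eq:ass4}--\eqref{eq:ass33} as literally written; it is an additional ingredient of the definition of local-causal correlations in Ref.~\cite{van2022device} that the present paper's summary of the constraints leaves implicit. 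Your proof is therefore complete relative to the intended definition from the reference, and your diagnosis that this factorization is the only genuinely delicate point is accurate; to make the argument airtight you would state the conditional factorization as part of the definition of $\mathcal{LC}$ rather than hoping to derive it from the marginal independences.
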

\begin{proof}
Note that the conditions of Theorem 1 in Ref.~\cite{van2022device} are satisfied in any round $k$. 
Hence by applying Theorem 1 in Ref.~\cite{van2022device} to round $k$, we obtain that $I _{LC} ^k \le 7/4$ for any $k$.
\end{proof}

\section{Review of the quantum switch}
As a prerequisite for the violation of the LC bound, we review what is a quantum switch \cite{chiribella2013quantum}. Its setting is as follows. Initially, we have a quantum state
$t$ and we wish to apply two gates $A_1$ and $A_2$ sequentially on this quantum state.
There are two obvious ways to do this. One is to apply $A_1$ before $A_2$, as shown in 
Fig.~\ref{fig:QS}(a). The other is to apply $A_2$ before $A_1$, as shown in Fig.~\ref{fig:QS}(b).
For both of these two cases, there is a definite causal order between the two quantum processes
$A_1$ and $A_2$. The idea of the quantum switch is to add a control qubit, denoted as $c$, to
control the order between $A_1$ and $A_2$. A control qubit $\ket{0}$ denotes $A_1$ is before 
$A_2$. A control qubit $\ket{1}$ denotes $A_2$ is before 
$A_1$. A control qubit $(\ket{1}+\ket{0})/\sqrt{2}$ then represents an indefinite causal order 
between $A_1$ and $A_2$, as illustrated in Fig.~\ref{fig:QS}(c). We will use
an alternative drawing of the quantum switch as depicted in Fig.~\ref{fig:QS}(d) for visual 
clarity. The inner working of Fig.~\ref{fig:QS}(d) is exactly the same as that of Fig.~\ref{fig:QS}(c).

\begin{figure}[htb]
\centering \includegraphics[width=8.5cm]{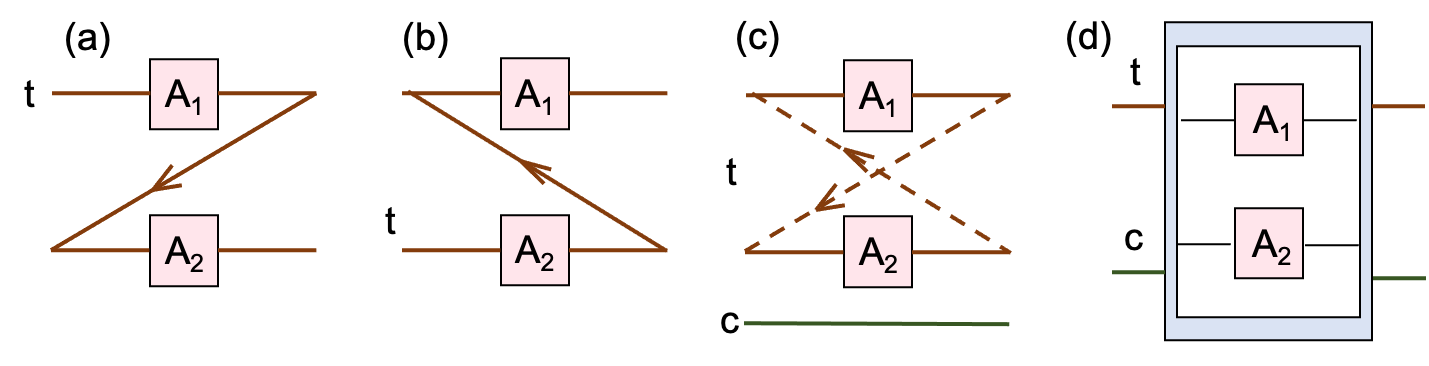}
\caption{A quantum switch. (a) $A_1$ is causally before $A_2$. (b) $A_2$ is causally after $A_1$. (c) The inner working of a quantum switch that acts on $A_1$ and $A_2$. (d) A pictorial representation of a quantum switch. }
\label{fig:QS}
\end{figure}

\section{Setup for sequential violations}After defining the LC bound and reviewing the quantum switch, we are ready to define the setting where we use a quantum switch 
to enable sequential device-independent certification of an indefinite causal order, as illustrated 
in Fig.~\ref{fig:relation}(b).  First, a quantum state is prepared for the control qubit (denoted as $c$ and lying in the Hilbert space $\mathcal{H}_C $),
the target qubit (denoted as $t$ and lying in the Hilbert space $\mathcal{H}_T$), and the input qubit of Bobs' (denoted as $b$).  The control qubit and the target qubit 
are fed into a quantum switch \cite{chiribella2013quantum}, which also takes two quantum channels $A_1$ and $A_2$ as inputs 
and achieves an indefinite causal order for a suitable choice of the control qubit. 
More precisely, $A_1$ and $A_2$ are both quantum channels on the Hilbert space $\mathcal{H}_T$.
The quantum switch $\mathbb{S}$ takes the two quantum channels $A_1=E (\cdot) E^\dagger$ and $A_2=F (\cdot) F^\dagger$ as inputs, where $E$ and $F$ are Kraus operators,  and outputs
a quantum channel $\mathbb{S}(A_1,A_2) =  W (\cdot) W^\dagger$ acting on the Hilbert space $\mathcal{H}_T \otimes \mathcal{H}_C $, where $W =  \ket{0}\bra{0}_c \otimes FE + \ket{1}\bra{1}_c \otimes EF$. Here, $ \ket{0}\bra{0}_c$ and $\ket{1}\bra{1}_c$  act on the Hilbert space $\mathcal{H}_C $.
The quantum output of $\mathbb{S}(A_1,A_2)$ is fed to Charlie, who then takes a classical
input $z$ and gives a classical output $c$.

Bob 1 takes his input quantum state, measures it according to his input setting $y_1$,
and obtains the classical outcome $b_1$. The postmeasurement state is given 
to Bob 2 who then repeats the process of Bob 1. Bob 2 then forwards his output to Bob 3,
Bob 3 to Bob 4, etc., and finally from Bob $n-1$ to Bob $n$.
Crucially Bob $k$ does not reveal his setting and classical outcome to Bob $k+1$. It is in this 
sense that different Bobs are independent. When the probability distribution $p$ is generated 
by this setup, we write $p\in \mathcal{Q}$. When $p\in \mathcal{Q}$, we write $I^k$ as $I^k_Q$.

\section{Existence of the violation of two sequential tests}Before explaining the quantum strategy, let us first describe the postmeasurement states of Bob $k$ 
in more detail. Let us denote the initial state that Charlie and Bob 1 get as $\rho_{CB}^1 $.  
Suppose Bob 1 gets the measurement setting $y$ and obtains the outcome $b$. 
According to the L\"{u}der rule \cite{brown2020arbitrarily}, the postmeasurement state of Bob 1 is 
\begin{equation}
 \rho_{CB}^1 \to  \left( \mathbbm{I} \otimes \sqrt{F^1_{b|y}}  \right) \rho_{CB}^1 \left( \mathbbm{I} \otimes \sqrt{F^1_{b|y}}  \right), 
\end{equation}
where $F^1_{b|y}$ is the positive operator-valued measure (POVM) element corresponding to the measurement setting $y$ and the outcome $b$.
Since we assume Bob 2 is ignorant of Bob 1's measurement setting and outcome, the postmeasurement state of Bob 2 is an
average over all possible measurement settings and outcomes, which has the form
\begin{equation}
\label{eq:poststate}
\rho_{CB}^2 =\frac{1}{2} \sum\limits_{b_1,y_1} \left( \mathbbm{I} \otimes \sqrt{F^1_{b_1|y_1}}  \right) \rho_{CB}^1 \left( \mathbbm{I} \otimes \sqrt{F^1_{b_1|y_1}}  \right).
\end{equation}
Here $ b_1, y_1 \in \{0,1\}$. The coefficient $1/2$ is due to $y_1$ having two values. If $y_1$ has, e.g., three values, the coefficient will be $1/3$.

Next, we give a quantum strategy that violates the LC bound for the case of two rounds, which is illustrated in Fig.~\ref{fig:quantum}.
Let us start from the target qubit $T$. The target qubit is initialized as $\ket{0}_T$. 
The channel of Alice $i$ ($i=1,2$) is a measure-and-prepare one.
After getting the input $x_i$, Alice $i$ measures the target qubit in the computational basis and obtains $\ket{a_i}$.
She then prepares the output quantum state to be $\ket{x_i}$ for subsequent processing.
The order of Alice 1 and Alice 2 is controlled by the control qubit $C$.
After Alice 1 and Alice 2 operate on the target qubit, the target qubit is discarded.

\begin{figure}[htb]
\centering \includegraphics[width=8.5cm]{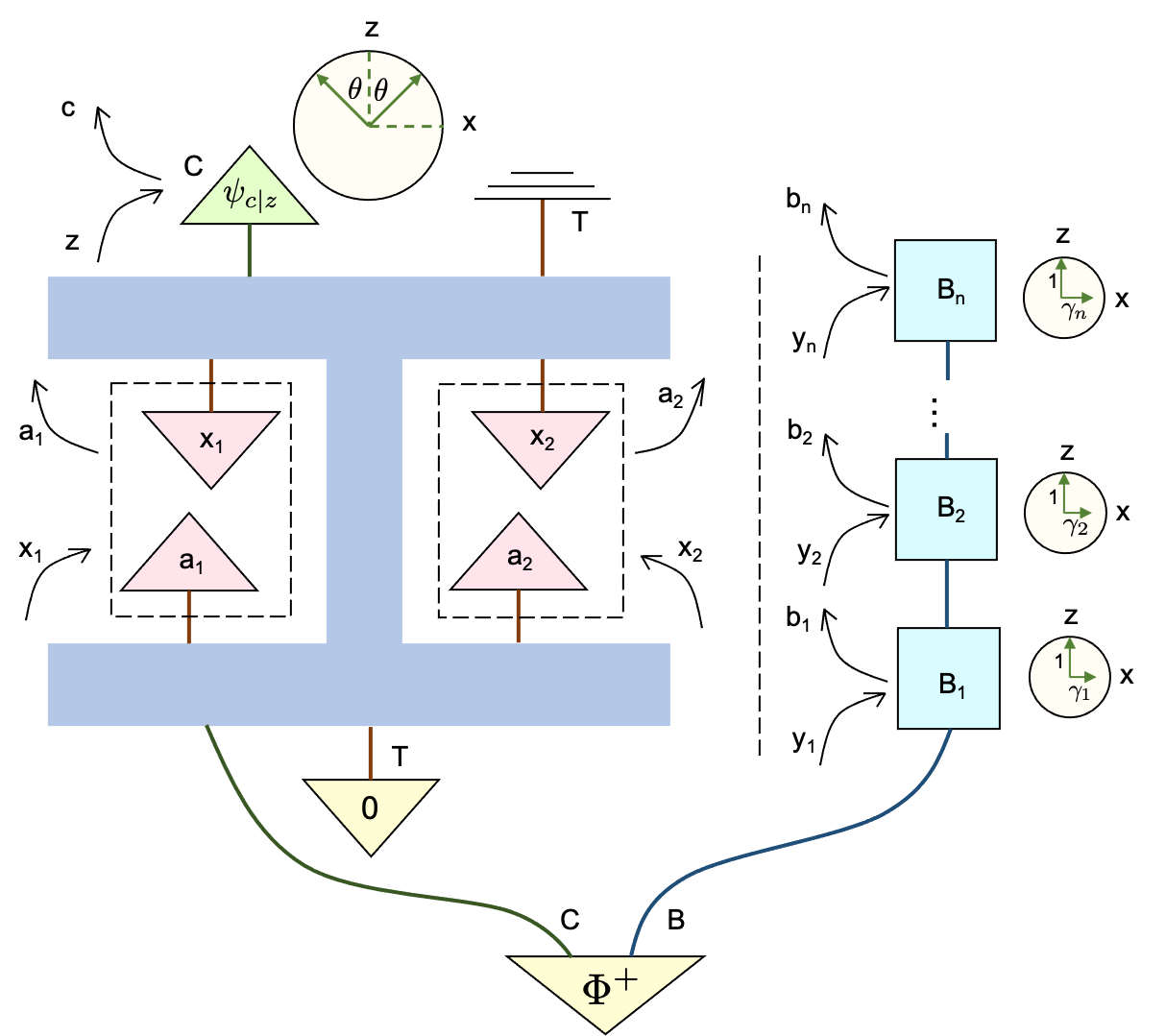}
\caption{The quantum strategy for sequential violations of the L-C bound.}
\label{fig:quantum}
\end{figure}

We now give the measurement strategy for Charlie and Bob. 
The initial quantum state between Charlie and Bob is $\ket{00}+\ket{11}$ where the normalizing factor is omitted for simplicity. 
Denote $C_{c|z}$ as the POVM element of Charlie when he receives the measurement setting $z$ and outputs the outcome $c$. 
Denote $B_{b|y}^k$ as the POVM element of Bob $k$ when he receives the measurement setting $y$ and outputs the outcome $b$. 
We then set the measurement strategy on Charlie and Bob $k$ as follows:
\begin{equation}
\begin{aligned} 
C_{0|0} & = \frac{1}{2} (I + \cos \theta \sigma_z + \sin \theta \sigma_x),  \\ 
C_{0|1} & = \frac{1}{2} (I + \cos \theta \sigma_z - \sin \theta \sigma_x),  \\ 
B_{0|0}^k & = \frac{1}{2} (I + \sigma_z ),  \\ 
B_{0|1}^k & = \frac{1}{2} (I + \gamma_k \sigma_x). 
\end{aligned}
\label{eq:measurement}
\end{equation}
Here, $I$ is a $2\times 2$ identity matrix, and $\sigma_x$ and $\sigma_z$ are Pauli $X$ and $Z$ matrices, respectively.
There are $n+1$ parameters $\theta$ and $\gamma_k$ ($ 1\le k \le n$) in the POVM elements. These parameters 
are left to be determined in later proofs.

With this strategy, we have the following theorem.
\begin{theorem}
Two sequential violations of the LC bound are possible with the quantum switch.
\end{theorem}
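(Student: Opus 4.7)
The proof is constructive. The plan is to derive closed-form expressions for $I^1_Q$ and $I^2_Q$ as functions of the free parameters $(\theta,\gamma_1,\gamma_2)$ in the measurement scheme of Eq.~\eqref{eq:measurement}, and then to exhibit a choice that makes both strictly exceed the local-causal bound $7/4$ from Theorem 1.

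For round 1, I would analyze $I^1_Q$ term by term. The measure-and-prepare structure of the Alice channels acting on the initial target $\ket{0}$ gives the deterministic relations $(a_1,a_2)=(0,x_1)$ when $\lambda=1$ and $(a_1,a_2)=(x_2,0)$ when $\lambda=2$. For $y_1=0$, Bob 1 performs a projective $\sigma_z$ measurement on his half of the shared state $\ket{\Phi^+}=(\ket{00}+\ket{11})/\sqrt{2}$, which correlates his outcome with the $\lambda$-label carried by the control qubit; the first two probability terms of $I^1$ then sum to $1$. The third term is a CHSH-like correlator on the effective bipartite state between Charlie and Bob 1, with Charlie's $\theta$-rotated observables and Bob 1's observables $\{\sigma_z,\gamma_1\sigma_x\}$. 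Assembling the three contributions yields $I^1_Q(\theta,\gamma_1)$, which strictly exceeds $7/4$ whenever $\gamma_1>0$ at an appropriate choice of $\theta$ (roughly $\tan\theta\approx\gamma_1$).

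For round 2, I would apply the L\"uder rule \eqref{eq:poststate} to obtain the averaged post-measurement state $\rho^2_{CB}$ seen by Bob 2. The $y_1=0$ branch dephases $\ket{\Phi^+}$ in the computational basis, while the $y_1=1$ branch, with Kraus operators $\sqrt{(I\pm\gamma_1\sigma_x)/2}$, contracts the $\sigma_z\otimes\sigma_z$ correlator by a factor $\sqrt{1-\gamma_1^2}$ while preserving the $\sigma_x\otimes\sigma_x$ correlator. Averaging over Bob 1's setting, $\rho^2_{CB}$ satisfies $\langle\sigma_z\otimes\sigma_z\rangle=(1+\sqrt{1-\gamma_1^2})/2$ and $\langle\sigma_x\otimes\sigma_x\rangle=1/2$. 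Re-running the round-1 analysis with this effective state produces a closed-form $I^2_Q(\theta,\gamma_1,\gamma_2)$: the switch-related probability terms are depressed because Bob 2's $\sigma_z$ measurement no longer perfectly identifies $\lambda$ after Bob 1's back-action, and the CHSH-like term inherits the contracted correlators.

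The final step is to exhibit parameters satisfying $I^1_Q>7/4$ and $I^2_Q>7/4$ simultaneously. The main obstacle is the tradeoff in $\gamma_1$: a larger $\gamma_1$ boosts $I^1_Q$ but depresses $I^2_Q$ through the $\sqrt{1-\gamma_1^2}$ contraction. I would settle the question by a perturbative continuity argument near $\gamma_1=0$ with $\gamma_2=1$: the $\theta$-optimum of $I^1_Q$ touches $7/4$ at $\gamma_1=0$ and grows with a positive leading-order term in $\gamma_1$, while the $\theta$-optimum of $I^2_Q$ is strictly above $7/4$ in a neighborhood of $\gamma_1=0$. Hence any sufficiently small positive $\gamma_1$ lifts both quantities above the threshold. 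A concrete choice (for instance $\gamma_2=1$, a suitably small $\gamma_1$, and $\tan\theta\approx\gamma_1$) then certifies the theorem by direct substitution, consistent with the quantitative claim from the introduction that both quantum values exceed $1.764$.
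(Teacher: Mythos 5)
Your proposal is correct and follows essentially the same route as the paper: decompose $I_Q^k$ into the switch-identification part and the CHSH-like part, propagate the state to round 2 via the L\"uder rule (your correlator bookkeeping $\langle\sigma_z\otimes\sigma_z\rangle=(1+\sqrt{1-\gamma_1^2})/2$, $\langle\sigma_x\otimes\sigma_x\rangle=1/2$ reproduces exactly the paper's Bell-state mixture and its $X^2=1-(1-\sqrt{1-\gamma_1^2})/8$), and then pick $\gamma_2=1$ with small $\gamma_1\approx\theta$. The paper simply substitutes the explicit values $\theta=\gamma_1=10^{-3}$, $\gamma_2=1$ where you invoke a perturbative continuity argument, but the two are the same calculation.
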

\begin{proof}
Denote the quantum state that Charlie and Bob $k$ receive as $\rho_{CB}^k$. 
$\rho_{CB}^1$ is the initial state that Charlie and Bob 1 receive. By the measurement strategy in Eq.~\eqref{eq:measurement}
and the update rule of the postmeasurement state [Eq.~\eqref{eq:poststate}], we obtain the relation 
between $\rho_{CB}^k$  and  $\rho_{CB}^{k-1}$ as
\begin{equation}
\begin{aligned} 
\rho_{CB}^k = & \frac{2+\sqrt{1-\gamma_1^2}}{4} \rho_{CB}^{k-1}  + \frac{1}{4}  ( \mathbbm{I} \otimes \sigma_z)  \rho_{CB}^{k-1}  ( \mathbbm{I} \otimes \sigma_z) \\
&+\frac{1-\sqrt{1-\gamma_1^2}}{4}   ( \mathbbm{I} \otimes \sigma_x)  \rho_{CB}^{k-1}  ( \mathbbm{I} \otimes \sigma_x).
\end{aligned}
\end{equation}

Let us examine the shared quantum states of the first two rounds more carefully. 
In the first round, the shared state $\rho_{CB}^1$ between Charlie and Bob is 
$\ket{00}+\ket{11}$,
where the normalizing factor is omitted for simplicity. 
In the second round, the shared state between Charlie and Bob becomes
\begin{equation}
\begin{aligned} 
\{ &(\frac{2+\sqrt{1-\gamma_1^2}}{4}, \ket{00}+\ket{11}),  (\frac{1}{4}, \ket{00}-\ket{11}), \\
 &(\frac{1-\sqrt{1-\gamma_1^2}}{4}, \ket{01}+\ket{10})  \},
\end{aligned}
\end{equation}
where the normalizing factor is omitted for simplicity. The first real number in each bracket is the probability and 
the second quantity in each bracket is the quantum state with this probability.

Let us now examine the quantum value of the LC inequality of the first two rounds.
The $k$-th round LC inequality $I_Q^k$ consists of two parts, which we denote by $X^k$ and $Y^k$ 
\begin{equation}
\begin{aligned}
 X^k = &  p( b_k=0, a_2=x_1 | y_k=0 ) \\
  & + p(  b_k=1 , a_1 = x_2  | y_k=0 ), \\
 Y^k  =  & p( b_k \oplus c  = y_k z | x_1 = x_2 = 0 ).
\end{aligned}
\end{equation}
We will first examine the value of $X^k$, and then examine the value of $Y^k$ for the two rounds $k=1,2$.

Let us first examine $X^1$. For the first round, the quantum state is
 $\ket{00}+\ket{11}$.  Conditioned on $y_1=0$, Bob 1 performs a $Z$ measurement on his share 
 of the quantum state. When he obtains $b_1=0$, the control qubit becomes $\ket{0}_c$  which 
 means that $A_1$ is before $A_2$ in the time order. Hence, since both $A_1$ and 
 $A_2$ are measure-and-prepare processes, we have $ a_2 = x_1$. Likewise, when Bob 1 
 obtains  $b_1=1$, the control qubit becomes $\ket{1}_c$  which 
 means that $A_2$ is before $A_1$ in the time order. Hence, $ a_1 = x_2$. Since the probabilities 
 that Bob 1 gets $b_1=0$ and $b_1=1$ are both $1/2$, we have 
 \begin{equation}
 \begin{aligned}
 X^1   =  &  p( b_1 = 0, a_2 = x_1 | y_1 = 0 )  \\
   &+ p(  b_1=1, a_1 = x_2  | y_1=0 ) \\
 =  & p( b_1 = 0 | y_1 = 0 ) + p( b_1=1 | y_1=0 ) \\
  = & \frac{1}{2} + \frac{1}{2}  =1.
 \end{aligned}
 \end{equation}

Then let us examine $X^2$ in the second round. As mentioned previously, there are three cases
for the quantum state shared between Charlie and Bob 2, namely $\ket{00}+\ket{11}$ 
$\ket{00}-\ket{11}$, and $\ket{01}+\ket{10}$. For the first two cases,  $b_1=0$ corresponds 
to  $\ket{0}_c$ and $b_1=1$ corresponds to $\ket{1}_c$.  Therefore, the $X$ value is 
still 1 according to the previous reasoning. For the third case, $b_1=0$ corresponds 
to  $\ket{1}_c$. This means that $A_2$ is before $A_1$ on the time order. Hence $a_2$
and $x_1$ are independent. As $x_1$ is chosen uniformly randomly from $\{ 0,1\}$, the 
probability of $a_2=x_1$ is $1/2$.   For $b_1=1$ in the third case, by similar reasoning we have that
the probability of $a_1 = x_2$ is $1/2$.  As in the previous two cases, the probabilities 
 that Bob 1 gets $b_1=0$ and $b_1=1$ are both $1/2$.
Hence, the $X$ value in the third case is 
 \begin{equation}
 \begin{aligned}
 X   =  &  p( b_1 = 0, a_2 = x_1 | y_1 = 0 )  \\
   & + p(  b_1=1, a_1 = x_2  | y_1=0 ) \\
 =  & \frac{1}{2} p( b_1 = 0 | y_1 = 0 ) + \frac{1}{2} p( b_1=1 | y_1=0 ) \\
  = & \frac{1}{4} + \frac{1}{4}  = \frac{1}{2}.
 \end{aligned}
 \end{equation}
 Overall, since the first two cases have probability $(3+\sqrt{1-\gamma_1^2})/4$ and the third
 case has probability $(1-\sqrt{1-\gamma_1^2})/4$,
 we have that the overall $X$ value is 
\begin{equation}
X^2 = 1 \times \frac{3+\sqrt{1-\gamma_1^2}}{4} + \frac{1}{2} \times \frac{1-\sqrt{1-\gamma_1^2}}{4} =  1- \frac{1-\sqrt{1-\gamma_1^2}}{8 } .
\end{equation}
 
Now let us turn to $Y$. We first define the following quantity, 
\begin{equation}
 \label{eq:CHSHdef}
\begin{aligned}
I_{CHSH}^k =& 2 [ q( c = b_k | 00) +q( c = b_k |01) +q( c =    \\
    & b_k |10) +q( c \not = b_k |11) - 2 ],
\end{aligned}
\end{equation} 
which is called the $k$th-round Clauser-Horne-Shimony-Holt (CHSH) value. 
Here $q( c = b_k | 00)$ means the probability that $c = b_k$ when $z=0, y_k=0$. The other quantities are similar.
Importantly, in this definition, we assume that only two parties are involved, namely Charlie and Bob $k$. Alice 1 and 
Alice 2 do not show up in this inequality.
By the result of Ref.~\cite{brown2020arbitrarily},  
the $k$th-round CHSH value is 
\begin{equation}
I_{CHSH}^k = 2^{2-k} (\gamma_k \sin \theta + \cos \theta \prod \limits_{j=1}^{k-1} (1+\sqrt{1- \gamma_j^2}) ).
\end{equation}

We now relate $Y^k$ with $ I_{CHSH}^k$. 
First note that when $x_1 = x_2=0$, the target qubit always has the value 
of $\ket{0}$ at the end, regardless of what is the order of $A_1$ and $A_2$. 
Hence, the two probability distributions $ p ( \cdot | x_1 =0, x_2=0) $ and $q( \cdot)$ are equivalent.
 Next note that 
 \begin{equation}
 \label{eq:Yconvert}
\begin{aligned}
Y^k = & p( b_k \oplus c  = y_k z | x_1 = x_2 = 0 )  = q(b_k \oplus c  = y_k z) \\
=  & \sum \limits_{u,v\in \{ 0,1\} } q( y_k=u,z= v)  q (b_k \oplus c  = y_k z | uv )\\
  = & \sum \limits_{u,v\in \{ 0,1\} } \frac{1}{4} q (b_k \oplus c  = y_k z | uv )\\
  = &  \frac{1}{4}  (q( c = b_k | 00) +q( c = b_k |01)   \\
  & +q( c =b_k |10) +q( c \not = b_k |11)) .
\end{aligned}
 \end{equation} 
Hence by combining Eqs.~\eqref{eq:CHSHdef} and  \eqref{eq:Yconvert}, we get
\begin{equation}
I_{CHSH}^k = 2 [ 4 Y^k - 2 ].
\end{equation}
Rearranging the terms, we get
\begin{equation}
Y ^k= \frac{1}{2} +  I_{CHSH}^k / 8 .
\end{equation} 

Combining the information of $X^k$ and $Y^k$, we have that 
in the first round, the value of $X^1+Y^1$ is 
\begin{equation}
I_Q^1 =   \frac{3}{2} +  \frac{1}{4}  (\gamma_1 \sin \theta + \cos \theta  ),
\end{equation}
while  in the second round, the value of $X^2+Y^2$ is 
\begin{equation}
I_Q^2=  1- \frac{1-\sqrt{1-\gamma_1^2}}{8 } + \frac{1}{2} + \frac{1}{8}  (\gamma_2 \sin \theta + \cos \theta  (1+\sqrt{1- \gamma_1^2}) ). 
\end{equation}
We now examine whether both of these quantities can violate 
the classical bound $7/4$. Indeed, by letting $\theta = 10^{-3} $, $\gamma_1 = 10^{-3} $, and $\gamma_2 =1 $,
we have $I_Q^1  > 7/4$ and $I_Q^2 > 7/4$. 
\end{proof}

By optimizing the parameters, we have the following more quantitative result.
\begin{theorem}
With the quantum switch, the minimum quantum value of the two rounds $\min (I_{Q}^1, I_{Q}^2)$ is at least 1.7640,
 larger than the classical upper bound $1.75$.
\end{theorem}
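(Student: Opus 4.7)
The plan is to maximize the objective $\min(I_Q^1,I_Q^2)$ over the free parameters $\theta$, $\gamma_1$, $\gamma_2$, using the closed-form expressions derived in the proof of Theorem 2:
\[
I_Q^1 = \tfrac{3}{2} + \tfrac{1}{4}(\gamma_1\sin\theta + \cos\theta),
\]
\[
I_Q^2 = \tfrac{3}{2} - \tfrac{1-\sqrt{1-\gamma_1^2}}{8} + \tfrac{1}{8}\bigl(\gamma_2\sin\theta + \cos\theta\,(1+\sqrt{1-\gamma_1^2})\bigr).
\]
The POVM positivity constraints from Eq.~\eqref{eq:measurement} force $|\gamma_k|\le 1$, and we may restrict attention to $\theta\in[0,\pi/2]$ by the reflection symmetries $\theta\mapsto -\theta$, $\gamma_k\mapsto -\gamma_k$.

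First I would reduce the number of free parameters. Since $\gamma_2$ enters only $I_Q^2$ and only through the linear term $\gamma_2\sin\theta/8$ with nonnegative coefficient (given $\sin\theta\ge 0$), its optimal value is $\gamma_2=1$. Intuitively, this is because Bob~$2$ is the last in the chain and no longer needs to preserve entanglement for a downstream party, so he is free to saturate the POVM bound. This leaves a two-parameter optimization in $(\theta,\gamma_1)$.

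Next I would argue that at any interior maximizer the two rounds balance, i.e.\ $I_Q^1=I_Q^2$. If instead $I_Q^1<I_Q^2$ at a candidate point, a small positive perturbation of $\gamma_1$ raises $I_Q^1$ at rate $\sin\theta/4$, while the induced change in $I_Q^2$ (through the $\sqrt{1-\gamma_1^2}$ terms) is of the same order and can be partially compensated by a perturbation of $\theta$; the net effect strictly raises $\min(I_Q^1,I_Q^2)$, contradicting optimality, and the symmetric argument handles the reverse inequality. Imposing $I_Q^1=I_Q^2$ together with the stationarity condition $\partial_\theta I_Q^1 = 0$ along the constraint curve gives a system of two transcendental equations in $(\theta,\gamma_1)$. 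Solving numerically yields a critical point $(\theta^\star,\gamma_1^\star)$, and evaluating either $I_Q^1$ or $I_Q^2$ there yields the common value bounded below by $1.7640$.

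The main obstacle is that the coupled equations are transcendental and do not admit a clean closed-form solution, so the argument rests on a numerical certificate rather than an algebraic identity. To make the conclusion rigorous I would supplement the critical-point analysis with a dense grid search on the compact domain $[0,\pi/2]\times[0,1]$, which simultaneously certifies that the located stationary point is the global maximum of the min-function (rather than a saddle) and that the claimed lower bound $1.7640$ holds with enough precision to comfortably exceed the classical threshold $7/4 = 1.75$, thereby establishing the quantum advantage in both rounds.
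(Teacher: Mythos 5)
Your proposal is correct and follows essentially the same route as the paper: fix $\gamma_2=1$, use the closed-form expressions for $I_Q^1$ and $I_Q^2$ from Theorem 2, and numerically maximize $\min(I_Q^1,I_Q^2)$ over $(\theta,\gamma_1)$, landing at the balanced point $\theta\approx 0.411$, $\gamma_1\approx 0.349$ with common value $\approx 1.7640$. Your added justifications (why $\gamma_2=1$ is optimal, why the two rounds equalize at the optimum, and the grid-search certificate of global optimality) are refinements the paper omits, but the underlying argument is the same numerical optimization.
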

\begin{proof}
We optimize $\theta $ and $\gamma_1$ such that $ \min (I_Q^1, I_Q^2) $ is maximized.
First we assume $\theta = \gamma_1$ and let $I_Q^1= I_Q^2$, we obtain that in this case
$\theta = \gamma_1=0.3312 $, $I_Q^1= I_Q^2 = 1.7633 > 7/4$.
Next, we explore the case $\theta \not = \gamma_1$ and maximize $ \min (I_Q^1, I_Q^2) $ around $(\theta, \gamma_1)=(0.3312,0.3312)$; we obtain 
$\theta = 0.411$, $\gamma_1=0.349 $, $I_Q^1= 1.7640$, and $I_Q^2 = 1.7641$.
\end{proof}

We then compare our scheme with the scheme proposed in Ref.~\cite{van2022device}. Note that the 
scheme presented in Ref.~\cite{van2022device} is a special case of our scheme, with parameters $\theta=\pi/4$ and $\gamma_1=\gamma_2=1$.
The comparison is summarized in Table~\ref{tab:compare}. In the first round, our scheme exhibits a smaller violation value compared to the scheme described in Ref.~\cite{van2022device}. However, in the second round, our scheme achieves a larger violation value in contrast to the scheme of Ref.~\cite{van2022device}. Notably, the violation of the scheme proposed in Ref.~\cite{van2022device} completely vanishes during the second round, resulting in an infinite ratio of violation when comparing our scheme to theirs.

\begin{table}[htb] 
\caption{Comparison of the violation amount $I_Q-I_{LC}$.}
\centering
\begin{tabular}{ccc}
\hline
   & Round 1 & Round 2 \\ 
\hline
Ref.~\cite{van2022device} &  \textbf{0.1036} & 0 \\
This work & 0.0140 & \textbf{0.0140}  \\
\hline
\end{tabular}
\label{tab:compare}
\end{table}

\section{Detection loophole}In reality, the detector in the device-independent test is not perfect. In particular, 
photonic detectors do not have unit efficiency, which may induce detection loopholes in the test \cite{massar2002nonlocality, vertesi2010closing, christensen2013detection, cao2016tight, liu2018high, cao2021detection}. 
Therefore, in this section, we examine what is the detection efficiency requirement of the detectors
for a successful sequential DI test of the indefinite causal order. 

Let us start by describing the detector model, as illustrated in Fig.~\ref{fig:detector}(a). A detector with efficiency $\eta$ has a probability of $1-\eta$ of detecting nothing and a probability of $\eta$ of functioning as an ideal detector. It is assumed that all detectors used in the experiment have the same efficiency, although our method can be extended to detectors with varying efficiencies. In the latter case, we will obtain a range of permissible efficiencies for each detector, rather than a single figure of merit.

\begin{figure}[htb]
\centering \includegraphics[width=8.5cm]{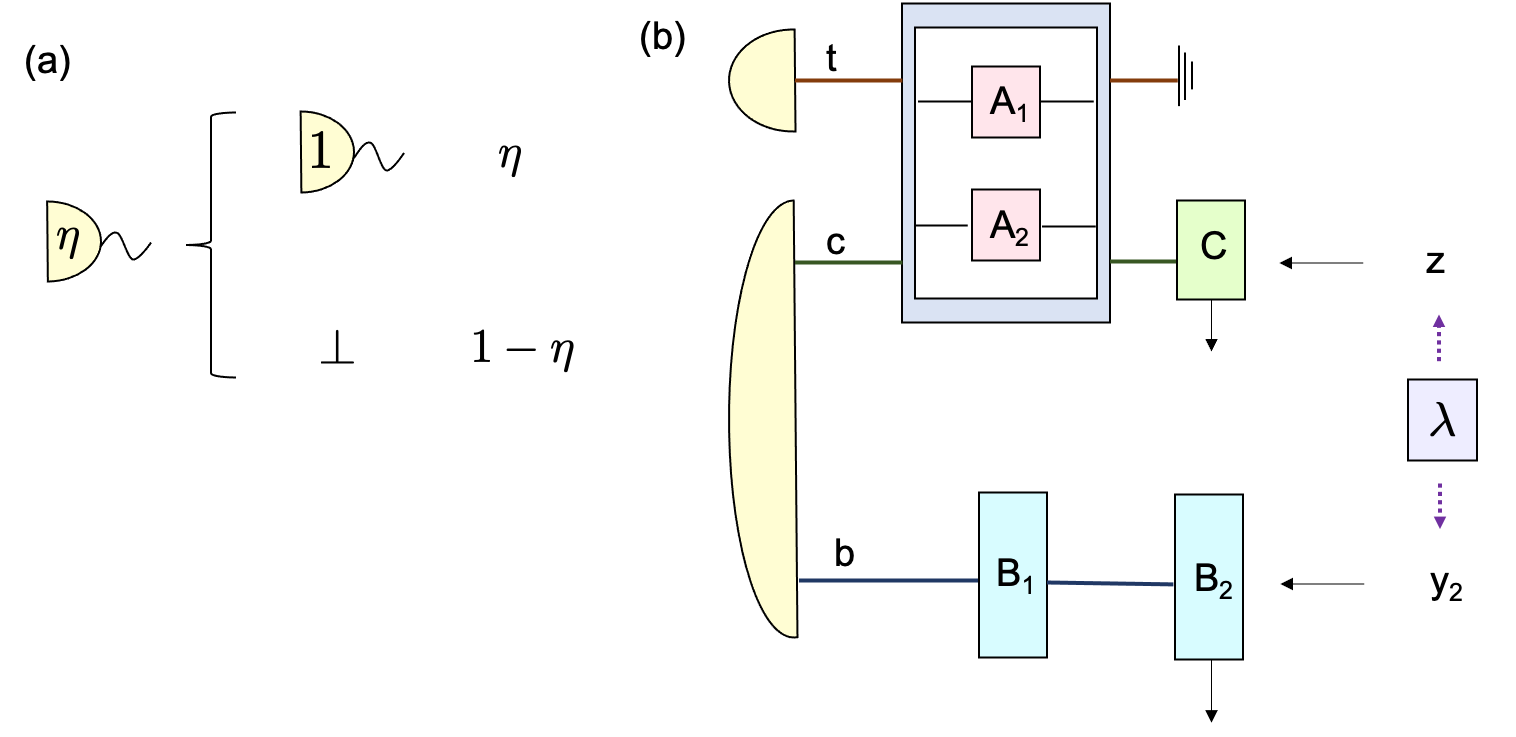}
\caption{(a) The detector model. (b) The model of the randomness loophole.}
\label{fig:detector}
\end{figure}

The result of this section is summarized in the following theorem.
\begin{theorem}
When the detection efficiency $\eta$ is at least $99.6\%$, it is possible to sequentially device-independently certify the quantum 
switch two times.
\end{theorem}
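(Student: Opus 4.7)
The plan is to quantify how inefficient detectors modify the measured statistics $I_Q^1$ and $I_Q^2$ from Theorem 3, and to determine the threshold efficiency at which the modified statistics still exceed the LC bound $7/4$ in both rounds. First I would formalize the detector model: when a detector with efficiency $\eta$ is queried, with probability $\eta$ it returns the true outcome and with probability $1-\eta$ it reports ``no click''. To remain device independent, I would fix a deterministic assignment (say output $0$) for every no-click event, which gives a worst-case estimate of the damage an adversary can inflict; if the violation survives this convention, it survives more lenient ones. Each round involves four detectors, at Alice $1$, Alice $2$, Charlie, and Bob $k$, so the ``all detectors click'' event in round $k$ has probability $\eta^4$.

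Next I would expand $I_{obs}^k$ as a function of $\eta$ together with the strategy parameters $\theta$, $\gamma_1$, $\gamma_2$ of Eq.~\eqref{eq:measurement}. With probability $\eta^4$ the four detectors all fire and we recover the ideal $I_Q^k$ analyzed in Theorem 3; the remaining $1-\eta^4$ of the weight must be split into the cases where one, two, three, or four detectors fail, and in each case the $X^k$ and $Y^k$ contributions recomputed with the relevant outputs replaced by $0$. For the first round this is mechanical: the events $b_1=0$, $a_2 = x_1$, etc., are just re-evaluated under the no-click rule. For the second round one must additionally account for the way a failure of Bob $1$'s detector alters the ensemble delivered to Bob $2$ through the L\"uder update rule Eq.~\eqref{eq:poststate}: if the measurement is still physically performed but its outcome discarded, Bob $2$ receives the same mixed state $\rho_{CB}^2$; if no measurement occurs at all, Bob $2$ inherits $\rho_{CB}^1$. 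I would take the worst case over these modeling choices so the bound is truly device independent.

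Finally I would re-run the optimization of Theorem 4 with the $\eta$-dependent expressions and numerically locate the smallest $\eta$ such that $\min(I_{obs}^1, I_{obs}^2) > 7/4$ at the optimal $(\theta,\gamma_1,\gamma_2)$. Since the quantum margin above the LC bound is only $\simeq 0.014$ in each round, and the leading-order loss scales like $4(1-\eta)$ times an $O(1)$ coefficient, the threshold is expected to land near $\eta \approx 0.996$, matching the claimed bound. The main obstacle is the second point above: correctly tracking how Bob $1$'s detector failure propagates to the state Bob $2$ measures, because the failure event does not commute cleanly with the Lüders update and the adversary may exploit this freedom. Once a worst-case rule is fixed, the remaining argument is a direct numerical optimization that smoothly recovers Theorem 4 in the limit $\eta \to 1$.
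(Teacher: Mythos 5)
There is a concrete gap in your accounting of which detectors enter each term, and it changes the threshold. The paper's proof observes that each of the three probability terms in $I^k$ conditions on the outcomes of only \emph{two} parties: $p(b_k=0, a_2=x_1\,|\,y_k=0)$ involves only Bob's and Alice 2's detectors, $p(b_k=1, a_1=x_2\,|\,y_k=0)$ only Bob's and Alice 1's, and $p(b_k\oplus c = y_k z)$ only Bob's and Charlie's. Treating a no-click as simply failing to satisfy the event (rather than substituting a deterministic outcome), each term is multiplied by $\eta^2$, so $I_Q^\eta=\eta^2 I_Q$ and the threshold is $\eta=\sqrt{1.75/1.7640}\approx 99.6\%$. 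Your plan instead requires all four detectors to fire per round, giving an all-click weight of $\eta^4$; with leading-order loss $4(1-\eta)I_Q$ and a margin of $0.014$, that forces $1-\eta\lesssim 0.014/(4\times 1.764)\approx 0.002$, i.e.\ $\eta\gtrsim 99.8\%$, not $99.6\%$. Your own stated expectation that a $4(1-\eta)$ loss lands at $0.996$ is therefore internally inconsistent; to recover the claimed bound you must exploit the fact that only two detectors are relevant to each term.

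Two further remarks. First, your deterministic ``assign $0$ on no-click'' convention reintroduces accidental coincidences (e.g.\ the no-click value $0$ happening to equal $x_1$), which must then be carefully bounded; the paper's convention of discarding no-clicks from the success events sidesteps this entirely and yields the clean multiplicative $\eta^2$ law. Second, your worry about how a failure of Bob 1's detector propagates through the L\"uders update to $\rho_{CB}^2$ is a legitimately more careful point than anything in the paper, which silently applies the same $\eta^2$ scaling to round 2; but it does not rescue the main counting issue above.
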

\begin{proof}
Let $p^\eta$ denote the probability distribution where all detectors have efficiency $\eta$. Let us first
examine the quantity $ p^\eta  ( b=0, a_2=x_1 | y=0 )$.  For this quantity, there are two detectors involved, 
i.e., Bob and Alice 2's detectors. Since both detectors have efficiency $\eta$, we have
\begin{equation}
 p^\eta  ( b=0, a_2=x_1 | y=0 ) = \eta^2  p( b=0, a_2=x_1 | y=0 ).
\end{equation}
By similar reasoning, we have $ p^\eta  (  b=1 , a_1 = x_2  | y=0 ) = \eta^2  p(  b=1 , a_1 = x_2  | y=0 )$
and $p^\eta  ( b \oplus c  = yz ) ) = \eta^2  p( b \oplus c  = yz ) )$.

Let $I_Q ^ \eta$ be the L-C quantity when the detectors have efficiency $\eta$. Then we have
\begin{equation}
\begin{aligned} 
I_Q ^ \eta 
& =  p^\eta( b=0, a_2=x_1 | y=0 ) + p^\eta(  b=1  \\
&\quad\quad  , a_1 = x_2  | y=0 ) + p^\eta( b \oplus c  = yz ) \\
& = \eta^2( p( b=0, a_2=x_1 | y=0 ) + p(  b=1  \\
&\quad\quad , a_1 = x_2  | y=0 ) + p( b \oplus c  = yz ) ) \\
& =  \eta^2 I_Q  > \frac{7}{4}.
\end{aligned} 
\end{equation}
In Theorem 3, we have that the maximum $I_Q$ for two rounds is 1.7640; hence the minimum detection efficiency
for violating the LC inequality two times is
\begin{equation}
\eta = \sqrt{1.75/ 1.7640} = 99.6\%.
\end{equation}
\end{proof}

\section{Randomness loophole}There is a free-will assumption in the derivation of the LC bound, which may be violated in practice \cite{yuan2015randomness}. In this section we examine to what extent 
this assumption can be relaxed. Consider the model illustrated in Fig.~\ref{fig:detector}(b).
Here, instead of $z$ and $y_k$ being randomly chosen, there is a hidden variable $\lambda$ that controls the probability 
distribution of $z$ and $y_k$ by $q( z,y_k|\lambda)$. Below, we always consider the $k$th round, and
write $y_k$ as $y$ for notation simplicity.

Let $P  = \max_{z,y,\lambda}  q(z,y | \lambda)$.
We also assume that Charlie and Bob's inputs are uncorrelated, namely,
\begin{equation}
q(z,y| \lambda) = q(z| \lambda) q(y| \lambda).
\end{equation}
Let us digest the meaning of $P$. In one extreme, $P=1$ corresponds to that there
is absolutely no free will for a certain $\lambda$. For that $\lambda$, the values 
of $z$ and $y$ are deterministic. In the other extreme,  $P=1/4$ corresponds to perfect free will, because it implies that
for any $z,y,\lambda$,  $q(z,y | \lambda)= 1/4$. Between the extremes, the smaller $P$ is, the more
free will we have.
Therefore, we only need to find the maximum $P$ such that the indefinite causal order can still be certified.
The following theorem summarizes the main result of this section.
\begin{theorem}
The minimum free will required is $P \le 0.2640$ for two successful sequential DI tests of the indefinite causal order.
\end{theorem}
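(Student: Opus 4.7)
The plan is to rederive the local-causal bound on $I^k$ under the biased-input model, express it as an explicit function of $P$, and then equate the result to the quantum value $1.7640$ obtained in Theorem 3. I would split $I^k$ into the conditional part $X^k = p(b_k=0, a_2=x_1 | y_k=0) + p(b_k=1, a_1=x_2 | y_k=0)$ and the CHSH-like part $Y^k = p(b_k \oplus c = y_k z)$ and bound each separately.

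Because $X^k$ is conditioned on $y_k=0$ and the Alices' inputs $x_1, x_2$ remain uniform and independent of $\lambda$, the LC argument of Theorem 1 carries over unchanged to give $X^k \le 1$ with no dependence on $q$, and this bound is saturated by the deterministic strategy $\lambda=1$, $a_2=x_1$, $b_k(0)=0$. For $Y^k$, decomposing any LC correlation into deterministic strategies and using that each deterministic CHSH assignment wins exactly three of the four settings, the adversary's best play for each $\lambda$ is to lose on the $(y,z)$ pair of smallest probability, yielding $Y^k_{LC}(P) \le \sum_\lambda p(\lambda)[1 - \min_{y,z} q(y,z|\lambda)]$. Writing $a = q(y=0|\lambda)$ and $b = q(z=0|\lambda)$, the remaining task is to minimise $\min_{y,z} q(y|\lambda)q(z|\lambda)$ over $(a,b)\in[0,1]^2$ subject to the product-form max constraint $\max q \le P$; a routine case analysis (reducing by symmetry to $a \le 1/2 \le b$) puts the optimum at the boundary $a=1/2$, $b=1-2P$, where the four probabilities become $\{P, P, 1/2-P, 1/2-P\}$, so $\min q = 1/2 - P$ and $Y^k_{LC}(P) \le 1/2 + P$.

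Combining the two pieces by exhibiting a single deterministic LC strategy that saturates both bounds simultaneously --- namely $\lambda=1$, $a_2=x_1$, and the CHSH assignment $b_k(0)=0$, $b_k(1)=1$, $c(0)=1$, $c(1)=0$, which loses on the low-probability pair $(y,z)=(0,0)$ --- gives the refined local-causal bound $I^k_{LC}(P) \le 3/2 + P$. Setting $3/2 + P = 1.7640$ and solving yields the claimed threshold $P \le 0.2640$.

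The main obstacle is the factorisation constraint $q(y,z|\lambda) = q(y|\lambda)q(z|\lambda)$: without it, the adversary could push $\min q$ all the way down to $1-3P$ and the LC bound would loosen much faster in $P$, but the product form couples the two marginals and forces a two-variable boundary optimisation whose solution is asymmetric --- crucially not the naive symmetric guess $a=b=\sqrt{P}$, which yields the suboptimal $(1-\sqrt{P})^2$ and would give the slightly wrong threshold $P \approx 0.2644$. A secondary subtlety is verifying that the extremal $X^k$ and $Y^k$ strategies are jointly realisable as a single LC correlation, which is why I would give the explicit compatible strategy above rather than merely adding the two pointwise bounds.
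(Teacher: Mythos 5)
Your proposal is correct and follows essentially the same route as the paper: the same split $I^k_{LC}\le X^k+Y^k$ with $X^k\le 1$, the same reduction of $Y^k$ to a biased CHSH game over product input distributions with $\max q\le P$, the same asymmetric optimum (your $a=1/2$, $b=1-2P$ is the paper's $P_B=1/2$, $P_C=2P$) giving $Y^k\le 1/2+P$ and hence $3/2+P=1.7640$. Your explicit check that the $X^k$- and $Y^k$-extremal strategies are jointly realizable as one LC correlation is a small tightening the paper omits, but it does not change the argument.
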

\begin{proof}
The quantity $I_{LC}^k$ consists of two parts,
\begin{equation}
I_{LC}^k = \alpha^k+ Y^k, 
\end{equation}
where 
\begin{equation}
\begin{aligned}
 \alpha^k = &  p( b_k=0, a_2=x_1 | y_k=0 ) 
 \\ &
   + p(  b_k=1 , a_1 = x_2  | y_k=0 ), \\
 Y^k  =  & p( b_k \oplus c  = y_k z | x_1 = x_2 = 0 ).
 \end{aligned}
\end{equation}
Next we examine the effect of no free will on $\alpha^k$ and $Y^k$ 
separately. 

We first start from $\alpha^k$. According to the definition of $\mathcal{LC}$, it is a convex combination of $\mathcal{LC}_1$ and $\mathcal{LC}_2$,
where in $\mathcal{LC}_1$, $A_1$ is causally ordered before $A_2$ and in $\mathcal{LC}_2$, $A_2$ is causally ordered before $A_1$.
Hereafter, we prove the case for $\mathcal{LC}_1$. The proof for  $\mathcal{LC}_2$ is similar. Then by a convex combination, we finish
the proof for $\mathcal{LC}$.

For $\mathcal{LC}_1$, we have the following equation:
\begin{equation}
p( b_k=0, a_2=x_1 | y_k=0 ) \le p( b_k=0 | y_k=0 ).
\end{equation}
Moreover, since $A_1$ is causally before $A_2$ and we have assumed $A_2$ has perfect free will to choose her setting $x_2$ (only Bob $k$ and Charlie
are restricted in their free will), we have
\begin{equation}
p( b_k=1, a_1=x_2 | y_k=0 ) \le \frac{1}{2} p( b_k=1 | y_k=0 ).
\end{equation}
Combining these two equations, we obtain 
\begin{equation}
p( b_k=0 | y_k=0 ) \ge 2\alpha^k - 1.
\end{equation}
Combined with the fact
\begin{equation}
p( b_k=0 | y_k=0 )  \le  1,
\end{equation}
we have
\begin{equation}
\alpha^k  \le 1.
\end{equation}

Let us then examine  $Y^k$. As previously, we transform $Y^k$ to $I_{CHSH}^k$, which
are related by  
\begin{equation}
Y^k = \frac{1}{2} +  I_{CHSH}^k / 8 .
\end{equation} 
Here, $I_{CHSH}^k$ has the expression
\begin{equation}
\begin{aligned}
I_{CHSH}^k =& 2 [ q( c = b_k | 00) +q( c = b_k |01) +q( c =    \\
    & b_k | 10) +q( c \not = b_k |11) - 2 ].
\end{aligned}
\end{equation} 
Let us consider the classical local strategy that $c$ and $b_k$ are always assigned zero,
 regardless of the values of $z$ and $y_k$, and show the upper bound of $I_{CHSH}^k$.
 Other classical strategies can be similarly analyzed, and the upper bound remains the same. 
With this strategy, we can write $I_{CHSH}^k$ as
\begin{equation}
\begin{aligned}
I_{CHSH}^k  =  &4 [q_C(0 |\lambda)  q_B(0 |\lambda) + q_C(0 |\lambda) q_B(1 |\lambda)  \\
& + q_C (1 |\lambda) q_B(0 |\lambda)  - q_C (1 |\lambda) q_B(1 |\lambda) ].
\end{aligned}
\end{equation}

For further analysis, we define two additional symbols $P_C$ and $P_B$, which satisfy
\begin{equation}
P_C = \max_{x, \lambda} \{ q_C (x |\lambda ) \},   P_B = \max_{x, \lambda} \{ q_C (x |\lambda ) \}.
\end{equation}
These two quantities are related with $P$ by 
\begin{equation}
P = P_C P_B.
\end{equation}
By the definition of $P_C$ and $P_B$, we have
\begin{equation}
q_C(1 |\lambda) = 1- q_C(0 |\lambda) \ge 1- P_C,
\end{equation}
and 
\begin{equation}
q_B(1 |\lambda) = 1- q_B(0|\lambda)  \ge 1- P_B.
\end{equation}
Therefore,
\begin{equation}
\begin{aligned}
 &  q_C(0 |\lambda)  q_B(0 |\lambda) + q_C(0 |\lambda) q_B(1 |\lambda) \\
   & \quad  + q_C (1 |\lambda) q_B(0 |\lambda)  -q_C (1 |\lambda) q_B(1 |\lambda)  \\
 =& 1- 2 q_C (1 |\lambda) q_B(1 |\lambda) \\
 \le & 1- 2(1-P_C )  (1-P_B).
 \end{aligned}
\end{equation}
Hence
\begin{equation}
\label{eq:chshfree}
I_{CHSH}^k \le 4 [1- 2(1-P_C)(1-P_B)] = 8( P_C + P_B - P) -4.
\end{equation}
The maximum of the right-hand side of Eq.~\eqref{eq:chshfree} is achieved when 
$  P_B=1/2$ and $ P_C = 2P$.  The maximum value achieved is
\begin{equation}
I_{CHSH}^k = 8P.
\end{equation}

By combining the value of $\alpha^k$ and $Y^k$, we have 
\begin{equation}
I_{LC}^k \le 1 + \frac{1}{2} + \frac{8P}{ 8}.
\end{equation}
When the classical bound $I_{LC}^k$ reaches  $I_Q^k = 1.7640$, 
no quantum violation will be possible. Hence, by solving
\begin{equation}
1 + \frac{1}{2} + \frac{8P}{ 8} = 1.7640,
\end{equation}
we obtain the maximum allowable $P$ as
\begin{equation}
P_{\max} = 0.2640.
\end{equation}
This finishes the proof of the theorem.
\end{proof}
Equivalently, the minimum min-entropy of the inputs $z$ and $y$ should be at least $\log_2(1/0.2640)=1.92$ for two successful sequential DI tests.

\section{Arbitrary large number of sequential tests}In this section, we examine the possibility of an arbitrary number of sequential DI tests of the indefinite causal order. 
Essentially, we need to check whether for all $1 \le k \le n$, the term $I_{Q}^k$ is larger than $7/4$.
To do so, we first need to get a lower bound on $I_{Q}^k$. 
To this end, we have the following lemma.
\begin{lemma}
 In the $k$th round, the value of $I_{Q}^k$ satisfies 
\begin{equation}
\begin{aligned}
I_{Q}^k \ge  & 1 + \frac{1}{2} \left( \prod\limits_{j=1}^{k-1} \left( \frac{3+\sqrt{ 1-\gamma_{j}^2}}{4}\right)  -1 \right)  + \frac{1}{2}  \\
& + \frac{1}{8} 2^{2-k} \left(\gamma_k \sin \theta + \cos \theta  \prod\limits_{j=1}^{k-1}  (1+\sqrt{1- \gamma_j^2}) \right).
\end{aligned}
\end{equation}
\end{lemma}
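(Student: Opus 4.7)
The proof mirrors the decomposition $I_Q^k = X^k + Y^k$ used in Theorem 2, bounding each piece independently. For $Y^k$, the argument transports directly: since $x_1 = x_2 = 0$ forces the target qubit to remain in $\ket{0}$ regardless of the causal order selected by the control, the target decouples and $Y^k = 1/2 + I_{CHSH}^k / 8$. Applying the formula of Ref.~\cite{brown2020arbitrarily} to the round-$k$ state $\rho_{CB}^k$ under Charlie's and Bob $k$'s measurements parameterized by $\theta$ and $\gamma_k$ gives
\begin{equation*}
I_{CHSH}^k = 2^{2-k}\left(\gamma_k \sin\theta + \cos\theta \prod_{j=1}^{k-1}(1+\sqrt{1-\gamma_j^2})\right),
\end{equation*}
which reproduces the second line of the claimed bound exactly.

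For $X^k$, I plan to induct on $k$ using the state-update recursion. At each round I decompose $\rho_{CB}^k$ into a \emph{good} component, supported on states of the form $\alpha\ket{00}+\beta\ket{11}$ in which a $Z$ measurement on Bob leaves the control qubit in $\ket{b_k}_c$, and a \emph{bad} component, supported on states of the form $\alpha\ket{01}+\beta\ket{10}$ for which the control is left in $\ket{1-b_k}_c$. The $X^2$ calculation in Theorem 2 already shows that the good component contributes $X = 1$ (the measure-and-prepare channels force $a_2 = x_1$ when $A_1$ acts first and $a_1 = x_2$ when $A_2$ acts first) while the bad component contributes $X = 1/2$ (the relevant Alice output is pinned to $0$ and matches the uniformly random $x_i$ only half the time). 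Hence $X^k = (1 + g_k)/2$, where $g_k$ is the total weight of the good component at round $k$. Inspecting the three branches of the recursion, the identity and $\sigma_z$ branches preserve the good/bad classification, while the $\sigma_x$ branch with weight $(1-\sqrt{1-\gamma_{k-1}^2})/4$ swaps the two. Dropping the positive contribution in which a bad state flips back to good yields $g_k \ge g_{k-1}(3+\sqrt{1-\gamma_{k-1}^2})/4$, which together with $g_1 = 1$ unrolls to $g_k \ge \prod_{j=1}^{k-1}(3+\sqrt{1-\gamma_j^2})/4$. Substituting this into $X^k + Y^k$ delivers the stated bound.

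The main obstacle is verifying the stability of the good/bad dichotomy across rounds: one must check that phase-variant good states such as $\ket{00}-\ket{11}$ still yield $X = 1$, that every generated bad state contributes at least $1/2$ rather than something smaller, and that averaging over all previous Bobs' measurement settings and outcomes (as dictated by Eq.~\eqref{eq:poststate}) respects this dichotomy. Once this invariance is established, the remainder is elementary bookkeeping on the weights of the Kraus-like recursion.
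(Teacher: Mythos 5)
Your proposal is correct and follows essentially the same route as the paper: the same $X^k+Y^k$ split, the same Brown--Colbeck formula for $Y^k$, and the same classification of the round-$k$ state into a $\{\ket{00}\pm\ket{11}\}$ component (worth $X=1$) versus a $\{\ket{01}\pm\ket{10}\}$ component (worth $X=1/2$), with the identity and $\sigma_z$ branches preserving the classification, the $\sigma_x$ branch swapping it, and the back-flip contribution dropped to obtain the product lower bound on the good weight. Your explicit recursion $g_k \ge g_{k-1}(3+\sqrt{1-\gamma_{k-1}^2})/4$ is just a cleaner restatement of the paper's probability bookkeeping (and correctly uses $\gamma_{k-1}$ where the paper's displayed recursion writes $\gamma_1$).
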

\begin{proof}
As $I_{Q}^k$ can be decomposed 
into two parts, $X^k$ and $Y^k$, we examine these two parts one by one in the following.

First, let us examine $X^k$.
Note first that the postmeasurement state of the $k$th round is related with 
the postmeasurement state of the ($k-1$)th round by 
\begin{equation}
\begin{aligned} 
\rho_{CB}^k = & \frac{2+\sqrt{1-\gamma_1^2}}{4} \rho_{CB}^{k-1}  + \frac{1}{4}  ( \mathbbm{I} \otimes \sigma_z)  \rho_{CB}^{k-1}  ( \mathbbm{I} \otimes \sigma_z) \\
&+\frac{1-\sqrt{1-\gamma_1^2}}{4}   ( \mathbbm{I} \otimes \sigma_x)  \rho_{CB}^{k-1}  ( \mathbbm{I} \otimes \sigma_x).
\end{aligned}
\end{equation}
Let us understand the three terms on the right-hand side of the above equation in more detail. Let us consider the case that
the initial state is $\ket{\Phi^+}=\ket{00} + \ket{11}$ or $\ket{\Phi^-}=\ket{00} - \ket{11}$.
For the first term, it keeps the state unchanged; i.e.,  $\ket{\Phi^+}$ remains $\ket{\Phi^+}$ and $\ket{\Phi^-}$ remains $\ket{\Phi^-}$. For the second term, it applies a $Z$ gate on the second qubit of the quantum state. This, in particular, implies that $\ket{\Phi^+}$ and $\ket{\Phi^-}$ are interchanged. For the third term,  it applies an $X$ 
gate on the second qubit of the quantum state. Then $\ket{00} + \ket{11}$ becomes $\ket{\Psi^+} = \ket{01} + \ket{10}$ and $ \ket{00} - \ket{11}$ becomes $\ket{\Psi^-} = \ket{01} - \ket{10}$, which goes out of the space spanned by $\{\ket{\Phi^+}, \ket{\Phi^-}  \}$.  

As analyzed previously, when the shared quantum state between Charlie and Bob $k$ is $\ket{\Phi^+}$ or $\ket{\Phi^-}$, $X^k$ has value 1, and when the shared quantum state between Charlie and Bob $k$ is $\ket{\Psi^+}$ or $\ket{\Psi^-}$, $X^k$ has value $1/2$. 
We now estimate the probability that the quantum state shared between Charlie and Bob $k$ is $\ket{\Phi^+}$ or $\ket{\Phi^-}$. 
According to the previous analysis, when the initial state is $\ket{\Phi^+}$ or $\ket{\Phi^-}$, for the first two cases, after the update, the state remains 
$\ket{\Phi^+}$ or $\ket{\Phi^-}$. The total probability of the first two cases for the $k$th round is $(3+\sqrt{1-\gamma_{k-1}^2})/4$. 
The probability that all $k$ rounds stay in the first two cases is at least $\prod_{j=1}^{k-1} (3+\sqrt{1-\gamma_{j}^2})/4$.
Therefore, we have a bound on $X^k$ as 
\begin{equation}
\begin{aligned}
X ^k  & \ge  \prod\limits_{j=1}^{k-1} \left( \frac{3+\sqrt{ 1-\gamma_{j}^2}}{4}\right)  \times  1  + ( 1- \\
 & \quad \quad \quad  \prod\limits_{j=1}^{k-1} \left( \frac{3+\sqrt{ 1-\gamma_{j}^2}}{4}\right) ) \times \frac{1}{2} \\
& = 1 + \frac{1}{2} \left( \prod\limits_{j=1}^{k-1} \left( \frac{3+\sqrt{ 1-\gamma_{j}^2}}{4}\right)  -1 \right) .
\end{aligned}
\end{equation}

For $Y^k$, we first use a result from Ref.~\cite{brown2020arbitrarily}, which states that $I_{CHSH}^k$ satisfies
\begin{equation}
 I_{CHSH}^k= 2^{2-k} \left(\gamma_k \sin \theta + \cos \theta  \prod\limits_{j=1}^{k-1}  (1+\sqrt{1- \gamma_j^2}) \right).
\end{equation}
Together with the relation
\begin{equation}
Y^k = \frac{1}{2} + \frac{I_{CHSH}^k}{8},
\end{equation}
we obtain
\begin{equation}
Y^k = \frac{1}{2} + \frac{1}{8} 2^{2-k} \left(\gamma_k \sin \theta + \cos \theta  \prod\limits_{j=1}^{k-1}  (1+\sqrt{1- \gamma_j^2}) \right).
\end{equation}

Therefore, in the $k$th round, the value of $I_{Q}^k=X^k+Y^k$ is 
\begin{equation}
\begin{aligned}
I_{Q}^k \ge  & 1 + \frac{1}{2} \left( \prod\limits_{j=1}^{k-1} \left( \frac{3+\sqrt{ 1-\gamma_{j}^2}}{4}\right)  -1 \right)  + \frac{1}{2}  \\
& + \frac{1}{8} 2^{2-k} \left(\gamma_k \sin \theta + \cos \theta  \prod\limits_{j=1}^{k-1}  (1+\sqrt{1- \gamma_j^2}) \right).
\end{aligned}
\end{equation}
\end{proof}

Our task is then reduced to choosing $\gamma_k$ ($1\le k\le n$) and $\theta$ such that $I_{Q}^k > 7/4$ for all $1\le k \le n$ and any arbitrary large $n$. 
The condition $I_{Q}^k > 7/4$ is equivalent to 
\begin{equation}
\label{eq:gamma_cond}
\begin{aligned}
\gamma_k > & \frac{1}{\sin \theta} [  2^{k-1} - \cos \theta  \prod\limits_{j=1}^{k-1}  (1+\sqrt{1- \gamma_j^2})  \\
 &+ 2^k (1-  \prod\limits_{j=1}^{k-1} \left( \frac{3+\sqrt{ 1-\gamma_{j}^2}}{4}\right) )   ].
\end{aligned}
\end{equation}
Therefore, for constructing valid $\gamma_k$, we construct 
 the following sequence
\begin{equation}
\label{eq:gammatheta} 
\begin{aligned}
\gamma_k(\theta) = 
&
(1+\epsilon) \frac{1}{\sin \theta} [  2^{k-1} - \cos \theta  \prod\limits_{j=1}^{k-1}  (1+\sqrt{1- \gamma_j^2})    \\
& + 2^k (1-  \prod\limits_{j=1}^{k-1} \left( \frac{3+\sqrt{ 1-\gamma_{j}^2}}{4}\right) )   ] .
\end{aligned}
\end{equation}
For positive $\epsilon>0$, we clearly have $\gamma_k(\theta)$ satisfy Eq.~\eqref{eq:gamma_cond}.

Equation~\eqref{eq:gammatheta} only makes sense if for any $j<k$,  we have $\gamma_j <1$. Therefore,
before proving our main theorem, we first prove the following lemma.
\begin{lemma}
\label{lemma:critical}
For any $n$, there exists a $\theta \in (0, \pi/4)$ such that $ \gamma_k(\theta) <1 $ for all $1\le k \le n$.
\end{lemma}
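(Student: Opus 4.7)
The plan is to argue by induction on $k$ that $\gamma_k(\theta)\to 0$ as $\theta\to 0^+$, more quantitatively $\gamma_k(\theta)=O(\theta)$, so that for any fixed $n$ I can pick $\theta$ small enough to force $\gamma_k(\theta)<1$ simultaneously for all $1\le k\le n$. The recursion \eqref{eq:gammatheta} only makes sense when $\gamma_j(\theta)<1$ for all $j<k$, so the inductive hypothesis must be stated in a form that guarantees both well-definedness and the $O(\theta)$ bound at the same time.

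For the base case $k=1$, the products in \eqref{eq:gammatheta} are empty and the whole bracket collapses to $1-\cos\theta$, giving $\gamma_1(\theta)=(1+\epsilon)(1-\cos\theta)/\sin\theta=(1+\epsilon)\tan(\theta/2)$, which is clearly $O(\theta)$ and less than $1$ for $\theta$ sufficiently small. For the inductive step, assume $\gamma_j(\theta)=O(\theta)$ for every $j<k$. Then $\sqrt{1-\gamma_j^2}=1-O(\theta^2)$, so the two products that appear in \eqref{eq:gammatheta} expand as
\begin{equation}
\prod_{j=1}^{k-1}\bigl(1+\sqrt{1-\gamma_j^2}\bigr)=2^{k-1}-O(\theta^2),\qquad \prod_{j=1}^{k-1}\frac{3+\sqrt{1-\gamma_j^2}}{4}=1-O(\theta^2).
\end{equation}
Substituting these into the bracket in \eqref{eq:gammatheta} and using $\cos\theta=1-O(\theta^2)$, the leading $2^{k-1}$ terms cancel and the bracket is $O(\theta^2)$. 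Dividing by $\sin\theta=\Theta(\theta)$ yields $\gamma_k(\theta)=O(\theta)$, closing the induction.

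With the asymptotic bound in hand, the conclusion of the lemma is obtained by a simple continuity/finite-intersection argument: each $\gamma_k(\theta)$ is a continuous function of $\theta$ in a neighborhood of $0$, and since $\gamma_k(\theta)\to 0$ as $\theta\to 0^+$, for each $k\le n$ there is some $\delta_k>0$ with $\gamma_k(\theta)<1$ for all $\theta\in(0,\delta_k)$. Taking $\theta\in(0,\min_{1\le k\le n}\delta_k)\cap(0,\pi/4)$ gives the required $\theta$.

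The main obstacle I expect is not the algebra but the bookkeeping: keeping the inductive statement strong enough that the $O(\theta^2)$ remainders in the two products can be controlled using only the previous $O(\theta)$ bounds, while also ensuring that the implicit constants (which blow up with $k$) remain finite for every fixed $n$. Because $n$ is held fixed and the induction is iterated only $n$ times, no uniform-in-$k$ control is needed, which is exactly what makes the argument go through.
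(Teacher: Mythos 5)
Your proposal is correct and follows essentially the same route as the paper: an induction on $k$ showing $\gamma_k(\theta)=O(\theta)$ (the paper makes this explicit by dominating $\gamma_k$ with an auxiliary sequence $p_k(\theta)=C_{k,\epsilon}\theta$ built from the elementary bounds $\sqrt{1-x^2}\ge 1-x^2$, $\cos\theta\ge 1-\theta^2/2$, $\sin\theta\ge\theta/2$), followed by choosing $\theta$ below the smallest threshold for $k\le n$. Your observation that the $k$-dependent implicit constants need no uniform control because $n$ is fixed matches the paper's treatment exactly.
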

\begin{proof}
By substituting the inequalities $\forall x\in [0,1],  \sqrt{1- x^2 }  \ge  1 - x^2$,  $\forall \theta \in (0,\pi/4],  \cos \theta  \ge  1 - \theta^2/2$,
and $\forall \theta \in (0,\pi/4],   \sin \theta  \ge  \theta/2$, we get 
\begin{equation}
\label{eq:appproof}
\begin{aligned}
\gamma_k(\theta)  \le 2^k  (1+\epsilon) \frac{ \theta^2/2 + \sum\limits_{j=1}^{k-1} \gamma_j^2/2 + 2  \sum\limits_{j=1}^{k-1} \gamma_{j}^2/4   }{\theta },
\end{aligned} 
\end{equation}
the derivation of which can be found in Appendix~\ref{appsec:eqder}.

Define 
\begin{equation}
\label{eq:p}
p_k(\theta) = 
\begin{cases}
2^k  (1+\epsilon) \frac{ \theta^2/2 + \sum\limits_{j=1}^{k-1} p_j^2/2 + 2  \sum\limits_{j=1}^{k-1} p_{j}^2/4   }{\theta } & \textrm{if } p_{j} \in (0,1),  \forall j<k; \\
\infty & \textrm{otherwise}. \\
\end{cases}
\end{equation}

Since the right-hand side of Eq.~\eqref{eq:p} increases with $p_j$ for any $j<k$, then $p_k \ge \gamma_k$. 

We now prove the statement that there exists $\theta_k \in (0, \pi/4]$ such that for all $1 \le j \le k$ and any $\theta \in (0, \theta_{k})$, $p_j( \theta) < 1$.

By direct calculation, we obtain 
\begin{equation}
p_1(\theta) = (1+\epsilon ) \theta.
\end{equation}
By taking $\theta_1 = 1/(2+\epsilon)$, the statement holds for $k=1$.

We now prove by induction that 
\begin{equation}
p_k(\theta) = C_{k, \epsilon} \theta
\end{equation}
holds for any $k$, where $ C_{k, \epsilon}$ is a positive constant that depends only on $k$ and $\epsilon$ but not $\theta$.
Clearly, the statement holds for $k=1$, by choosing $C_{1, \epsilon} =1+\epsilon $.  
Now assuming the statement holds for all $j \le k-1$, we now consider the case $k$.
By Eq.~\eqref{eq:p}, we have
\begin{equation}
\label{eq:p1}
\begin{aligned}
p_k(\theta)  & = 2^k  (1+\epsilon) \frac{ \theta^2/2 + \sum\limits_{j=1}^{k-1} p_j^2/2 + 2  \sum\limits_{j=1}^{k-1} p_{j}^2/4   }{\theta }  \\
& =  2^k  (1+\epsilon) \frac{ \theta^2/2 + \sum\limits_{j=1}^{k-1} C_{j, \epsilon}^2\theta^2/2 + 2  \sum\limits_{j=1}^{k-1} C_{j, \epsilon}^2\theta^2/4   }{\theta }  \\
& =  2^k  (1+\epsilon) (\theta/2 + \sum\limits_{j=1}^{k-1} C_{j, \epsilon}^2\theta/2 + 2  \sum\limits_{j=1}^{k-1} C_{j, \epsilon}^2\theta/4  ). \\
\end{aligned}
\end{equation}
By letting 
\begin{equation}
C_{k, \epsilon} = 2^k  (1+\epsilon) (1/2 + \sum\limits_{j=1}^{k-1} C_{j, \epsilon}^2/2 + 2  \sum\limits_{j=1}^{k-1} C_{j, \epsilon}^2/4  )    ,
\end{equation}
we have $p_k(\theta) = C_{k, \epsilon} \theta$.

Now choosing $\theta_k = \max \{ \pi/4,   1/( 2\max_{1\le j \le k } C_{j, \epsilon} ) \}$, we have  $p_j(\theta)< 1$ for all $1\le j\le k$ and all $\theta \in (0, \theta_{k})$.  Combined with the fact that $p_j \ge \gamma_j$ for any $1\le j \le k$, we have 
that $\gamma_j < 1$ for all $1 \le j \le k$, which finishes the proof.  
\end{proof}

Now we are ready to prove  the main theorem.
\begin{theorem}
For any $n$, there exists a suitable choice of $\theta$ and $0< \gamma_k \le 1$ ($1 \le k \le n$) such 
that  $I_Q^k  > 7/4 $ for all $ 1\le k\le n$.
\end{theorem}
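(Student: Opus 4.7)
The plan is to assemble the theorem from three ingredients already in hand: the lower bound on $I_Q^k$ from the preceding lemma, the explicit prescription in Eq.~\eqref{eq:gammatheta}, and the feasibility result in Lemma~\ref{lemma:critical}. The strategy is to show that a small enough $\theta$ makes the recursively defined sequence $\gamma_k(\theta)$ simultaneously admissible (lying in $(0,1]$) and sufficient (strictly beating the threshold that guarantees $I_Q^k > 7/4$).

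First, I would fix a small $\epsilon>0$ and apply Lemma~\ref{lemma:critical} to the given $n$ to obtain some $\theta \in (0,\pi/4)$ with $\gamma_k(\theta) < 1$ for every $1\le k\le n$. This is exactly what makes the recursion in Eq.~\eqref{eq:gammatheta} well posed up to index $n$: each earlier $\gamma_j$ is strictly less than $1$, so every $\sqrt{1-\gamma_j^2}$ entering the definition of $\gamma_k(\theta)$ is a real number in $[0,1)$, and the prefactor $1/\sin\theta$ is finite.

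Next, I would check the lower bound $\gamma_k(\theta) > 0$. Inside the bracket in Eq.~\eqref{eq:gammatheta}, each factor $1+\sqrt{1-\gamma_j^2}$ is at most $2$, so the first two terms together are at least $2^{k-1}(1-\cos\theta) > 0$ for any $\theta\in(0,\pi/4)$; the third term is nonnegative because each factor $(3+\sqrt{1-\gamma_j^2})/4$ is at most $1$. Dividing by $\sin\theta>0$ and multiplying by the positive prefactor $1+\epsilon$ preserves positivity, so the constructed $\gamma_k(\theta)$ indeed lies in the admissible interval $(0,1]$ for every $k\le n$.

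Finally, by construction $\gamma_k(\theta)$ equals $(1+\epsilon)$ times the right-hand side of the threshold inequality Eq.~\eqref{eq:gamma_cond}, so that inequality holds strictly for every $1\le k\le n$; the preceding lemma that bounds $I_Q^k$ from below then delivers $I_Q^k > 7/4$ in every round, proving the theorem. The genuinely delicate step---propagating the smallness of $\theta$ through all $n$ rounds so that the $\gamma_k$'s stay uniformly bounded away from $1$---has already been isolated and discharged in Lemma~\ref{lemma:critical} via the majorizing sequence $p_k$, and so the theorem itself reduces to the packaging argument above.
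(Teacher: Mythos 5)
Your proposal is correct and follows essentially the same route as the paper: invoke Lemma~\ref{lemma:critical} to obtain a $\theta$ for which the recursion Eq.~\eqref{eq:gammatheta} stays below $1$ through round $n$, set $\gamma_k=\gamma_k(\theta)$, and conclude via the strict version of Eq.~\eqref{eq:gamma_cond} together with the lower-bound lemma on $I_Q^k$. The only addition is your explicit check that $\gamma_k(\theta)>0$ (via $\prod_j(1+\sqrt{1-\gamma_j^2})\le 2^{k-1}$ and $\prod_j(3+\sqrt{1-\gamma_j^2})/4\le 1$), a detail the paper's one-line proof leaves implicit; this is a correct and welcome bit of extra care rather than a different argument.
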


\begin{proof}
Choose $\theta$ according to Lemma~\ref{lemma:critical} and let $\gamma_k =  \gamma_k(\theta)$. 
By the definition of $ \gamma_k(\theta)$, we have $I_Q^k  > 7/4 $ for any $1\le k \le n$ and this 
concludes the proof.
\end{proof}

\begin{figure}[htb]
\centering \includegraphics[trim={3cm 0 3cm 1cm},clip,width=8.5cm]{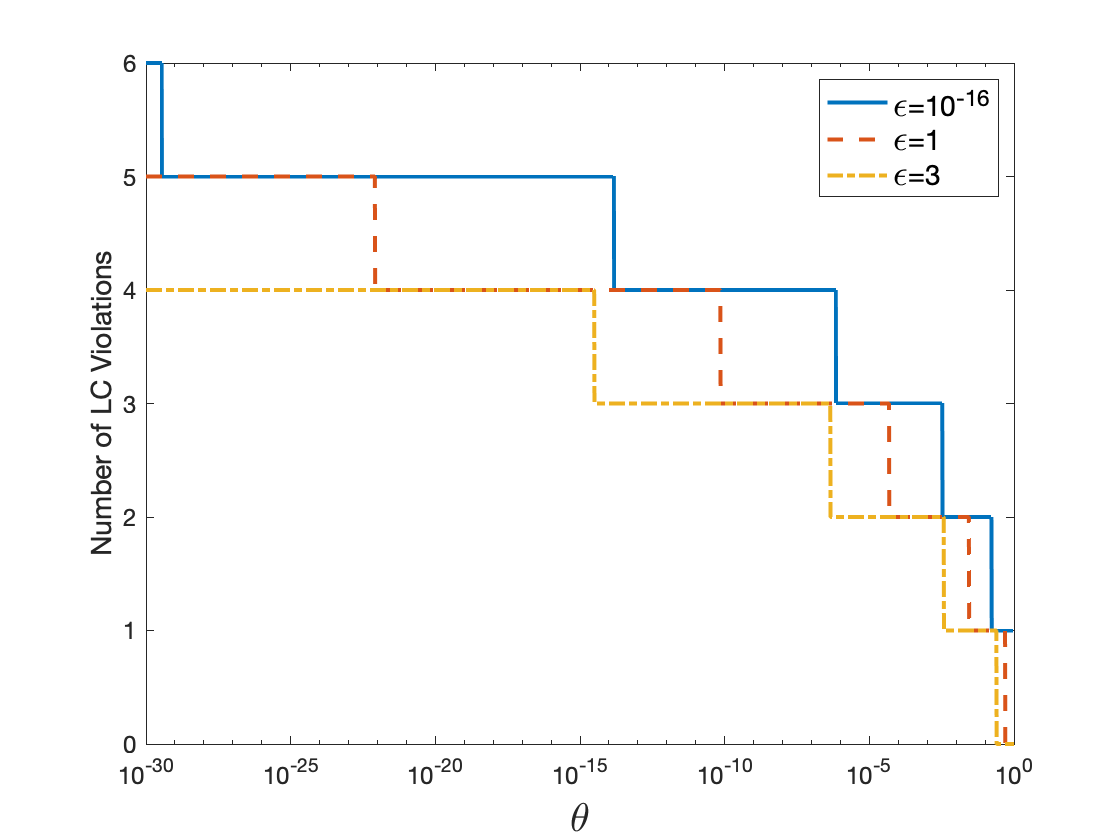}
\caption{The number of LC violations as a function of $\theta$ for different values of $\epsilon$.}
\label{fig:thetaVSk}
\end{figure}

We next explore the relation between $\theta$ and the maximum number of allowable violations $n$. By numerical
simulations, the result is shown in Fig.~\ref{fig:thetaVSk}. It can be seen that $\theta$ decreases more than exponentially
with respect to $n$ for any choice of the positive constant $\epsilon$. Therefore, the margin of violation also 
decreases more than exponentially with respect to $n$.

\section{Discussion}In this work, we studied sequential DI certification of the indefinite causal order in a quantum switch. We 
examined the case of two sequential violations and showed its possibility. We subsequently 
gave the maximum violation value for two sequential violations. We then moved
on to examine potential loopholes in the test. We showed that the detection loophole can be
avoided as long as the efficiency of the detectors is at least $99.6\%$. We also showed that
the randomness loophole can be avoided when the min-entropy of the inputs is at least 1.92. 
Finally, we showed that an arbitrary number of sequential certifications of an indefinite causal order is also
possible by constructing an explicit strategy.

There are a few interesting future directions. First, it is interesting to explore sequential
DI tests of other indefinite causal order phenomena, such as the one in \cite{oreshkov2012quantum}.
Second, in this work, we examined the case that one side of the spacelike separated parties 
is sequentially tested to obtain repeated violations. It is also interesting to explore the case where
both sides of the spacelike separated parties are sequentially tested to obtain repeated violations.
Third, since the margin of two sequential violations of the local-causal inequality is substantial, an
experimental demonstration of this sequential DI test is worth pursuing.

\begin{acknowledgements}
This work was supported by the National Natural Science Foundation of China (Basic Science Center Program 61988101), the National Natural Science Foundation of China (12105105), the Natural Science Foundation of Shanghai (21ZR1415800), the Shanghai Sailing Program (21YF1409800), the startup fund from East China University of Science and Technology (YH0142214) and the Shanghai AI Lab.
\end{acknowledgements}

\appendix 

\section{Derivation of Equation~\eqref{eq:appproof}}
\label{appsec:eqder}

Following is the derivation of Eq.~\eqref{eq:appproof} from the proof of Lemma \ref{lemma:critical}:
\begin{widetext}
\begin{equation}
\begin{aligned}
\gamma_k(\theta) & \le 
(1+\epsilon)  \frac{ 2^{k-1} - (1 - \theta^2/2)  \prod\limits_{j=1}^{k-1}  (2- \gamma_j^2) + 2^k (1-  \prod\limits_{j=1}^{k-1} \left( \frac{4 -\gamma_{j}^2}{4}\right) )    }{\theta/2}  \\
& = 2^k  (1+\epsilon) \frac{1-(1 - \theta^2/2)  \prod\limits_{j=1}^{k-1}  (1- \gamma_j^2/2) + 2- 2 \prod\limits_{j=1}^{k-1} ( 1- \gamma_{j}^2/4 ) }{\theta} \\
& \le 2^k  (1+\epsilon) \frac{ \theta^2/2 + \sum\limits_{j=1}^{k-1} \gamma_j^2/2 + 2  \sum\limits_{j=1}^{k-1} \gamma_{j}^2/4   }{\theta }.
\end{aligned} 
\end{equation}
\end{widetext}



%

\end{document}